\newtheorem{thm}{Theorem}
\newtheorem{co}{Corollary}
\newtheorem{lem}{Lemma}
\newtheorem{Prop}{Proposition}
\newcommand{\1}{ \vspace{0.1cm} }
\newcommand{\td}{{\rm td}}
\newcommand{\tdc}{\chi_{\td}}
\newcommand{\pn}{{\rm pn}}
\newcommand{\cT}{{\cal T}}
\newcommand{\barG}{{\overline{G}}}
\let\oldenumerate\enumerate
\renewcommand{\enumerate}{
  \oldenumerate
  \setlength{\itemsep}{1pt}
  \setlength{\parskip}{0pt}
  \setlength{\parsep}{0pt}
}
\title{\bf Complexity of total dominator coloring in graphs}
\author{Michael A. Henning$^{1,}$\thanks{mahenning@uj.ac.z},  Kusum$^{2,}$\thanks{2018maz0011@iitrpr.ac.in}, Arti Pandey$^{2,}$\thanks{arti@iitrpr.ac.in}, Kaustav Paul$^{2,}$\thanks{kaustav.20maz0010@iitrpr.ac.in}}
\date{\small{
    $^1$Department of Mathematics and Applied Mathematics,\\ University of Johannesburg, Auckland Park 2006, South Africa.\\
    $^2$Department of Mathematics, \\Indian Institute of Technology Ropar, Punjab, India.\\%
   }}
\begin{document}
\maketitle

\begin{center}
\textbf{\large{Abstract}}
\end{center}
%
%
%

Let $G=(V,E)$ be a graph with no isolated vertices. A vertex $v$ totally dominate a vertex $w$ ($w \ne v$), if $v$ is adjacent to $w$. A set $D \subseteq V$ called a \emph{total dominating set} of $G$ if every vertex $v\in V$ is totally dominated by some vertex in $D$. The minimum cardinality of a total dominating set is the \emph{total domination number} of $G$ and is denoted by $\gamma_t(G)$. A \emph{total dominator coloring} of graph $G$ is a proper coloring of vertices of $G$, so that each vertex totally dominates some color class. The total dominator chromatic number $\tdc(G)$ of $G$ is the least number of colors required for a total dominator coloring of $G$. The \textsc{Total Dominator Coloring} problem is to find a total dominator coloring of $G$ using the minimum number of colors. It is known that the decision version of this problem is NP-complete for general graphs. We show that it remains NP-complete even when restricted to bipartite, planar and split graphs. We further study the \textsc{Total Dominator Coloring} problem for various graph classes, including trees, cographs and chain graphs. First, we characterize the trees having $\tdc(T)=\gamma_t(T)+1$, which completes the characterization of trees achieving all possible values of $\tdc(T)$. Also, we show that for a cograph $G$, $\tdc(G)$ can be computed in linear-time. Moreover, we show that $2 \le \tdc(G) \le 4$ for a chain graph $G$ and give characterization of chain graphs for every possible value of $\tdc(G)$ in linear-time.

\vspace*{2mm}
\noindent
\textbf{Keywords}:{ Total dominator coloring . Bipartite graphs . Planar graphs . Chordal graphs . Cographs}

\section{Introduction}
\label{intro}
\noindent
Let $G=(V,E)$ be a graph, where $V = V(G)$ and $E=E(G)$ represents the set of vertices and set of edges in $G$, respectively. A vertex $v\in V$ is said to be adjacent to a vertex $u\in V$, if $uv\in E$. For $u,v \in V$, if $uv \in E$ then $u$ and $v$ are neighbours. The \emph{open neighbourhood} of a vertex $v\in V$ is the set of neighbours of $v$, denoted by $N_G(v)=\{u \mid uv \in E\}$, and the \emph{closed neighbourhood} of $v$ is the set $N_G[v] = N_G(v) \cup \{v\}$. The \emph{degree} of $v$ in $G$, $d_G(v)$ is the number of neighbours of $v$, and so $d_G(v) = \vert N_G(v) \vert$. A graph that contains no isolated vertex is said to be an \emph{isolate}-\emph{free} graph. A vertex $v \in V$ dominates all the vertices of its closed neighbourhood $N_G[v]$.

The concept of domination is very well studied in graphs. A set $D \subseteq V$ is a \emph{dominating set} of $G$ if every vertex in $V$ is dominated by some vertex of $D$. 
The \emph{domination number} of $G$, $\gamma(G)$ is the least cardinality of a dominating set of $G$. In an isolate-free graph $G$, a vertex $u$ \emph{totally dominates} a vertex $v$, $v \ne u$, if $uv \in E$. Thus, $v$ totally dominates all the vertices of its open neighbourhood $N_G(v)$. Note that every vertex dominates itself but does not totally dominate itself. A \emph{total dominating set}, abbreviated TD-set, of $G$ is a subset $D^t$ of $V$ such that every vertex of $V$ is totally dominated by some vertex of $D^t$. 
The \emph{total domination number} of $G$, $\gamma^t(G)$ is the least cardinality of a total dominating set of $G$. A TD-set of cardinality $\gamma_t(G)$ is called a $\gamma_t$-\emph{set of $G$}. For recent books on domination and total domination in graphs, we refer the reader to \cite{HaHeHe-20,HaHeHe-21,HaHeHe-22,HeYebook}.

A \emph{proper coloring} of $G$ is an assignment of colors to the vertices of $G$ such that if two vertices are adjacent, then they must be assigned different colors. The minimum number of colors required for a proper coloring of $G$ is the \emph{chromatic number} of $G$ and is denoted by $\chi(G)$. A subset of vertices that are assigned the same color in a proper coloring is a \emph{color class} of the coloring. A proper coloring of $G$ is said to be \emph{dominator coloring} it it also satisy the following property:  every vertex of $G$ dominates all the vertices of at least one color class. In other words, each vertex of $G$ belongs to either a singleton color class or is adjacent to every vertex of some other color class. The minimum number of colors required for a dominator coloring of $G$ is called the \emph{dominator chromatic number} of $G$, and is denoted by  $\chi_{d}(G)$. 
If a dominator coloring of $G$ uses exactly $\chi_{d}(G)$ colors, then it is called a $\chi_{d}$-\emph{coloring} of $G$.  The \textsc{Dominator Coloring} problem is to find a dominator coloring of $G$ using $\chi_{d}(G)$ colors. The decision version of the \textsc{Dominator Coloring} problem, abbreviated as the DCD problem, takes an isolate-free graph $G$ and a positive integer $k$ as input and asks whether $G$ has a dominator coloring using at most $k$ colors.

The total version of dominator coloring is also well studied in the literature. A \emph{total dominator coloring}, abbreviated TD-\emph{coloring}, of an isolate-free graph $G$ is a proper coloring of $G$ with the following additional property: each vertex of $G$ is adjacent to every vertex of some color class different from its own. The \emph{total dominator chromatic number} of $G$, denoted by $\tdc(G)$, is the minimum integer $k$ for which $G$ has a TD-coloring with $k$ colors. We have adopted the notation in the book chapter~\cite{He-21} on domination and total dominator coloring in graphs, but we remark that $\tdc(G)$ is also denoted by $\chi_{d}^t(G)$ in the literature (see, for example,~\cite{TDC2}). A TD-coloring of $G$ that uses exactly $\chi_{\td}(G)$ colors is called $\chi_{\td}$-\emph{coloring} of $G$. The \textsc{Total Dominator Coloring} problem is to find a TD-coloring of $G$ using the minimum number of colors, that is, to find a $\chi_{\td}$-coloring of $G$.

We note that a TD-coloring is only defined for isolate-free graphs. So, the graphs considered throughout this paper are isolate-free graphs. The decision version of the \textsc{Total Dominator Coloring} problem, abbreviated as the TDCD problem, takes an isolate-free graph $G$ and a positive integer $k$ as input and asks whether $G$ has a TD-coloring using at most $k$ colors. In $2009$ ~\cite{TDC11}, first introduced the concept of TD-coloring and is then extensively studied in last decade, see~\cite{TDC7,TDC8,TDC10,TDC12,TDC1,TDC9,TDC5,TDC2,TDC4,Vi12,TDC3,TDC6} and elsewhere. It is known that the TDCD problem is NP-complete for general graphs~\cite{TDC2}. The following result regarding bounds on $\tdc(G)$ is already known.
\begin{thm}{\rm \cite{TDC2,Vi12}}
\label{t:thmKa5}
For an isolate-free graph $G$, 
\[
\max \{ \gamma_t(G), \chi(G) \} \le \tdc(G) \le \gamma_t(G) + \chi(G).
\]
\end{thm}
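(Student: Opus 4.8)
The plan is to establish the two inequalities separately, each by a short direct argument.

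\textbf{Lower bound.} The inequality $\tdc(G)\ge\chi(G)$ is immediate: by definition every TD-coloring is also a proper coloring, and hence uses at least $\chi(G)$ colors. For $\tdc(G)\ge\gamma_t(G)$, I would take a $\chi_{\td}$-coloring of $G$ with color classes $V_1,\dots,V_k$, where $k=\tdc(G)$, and pick one representative vertex $v_i\in V_i$ from each class; put $S=\{v_1,\dots,v_k\}$. By the defining property of a TD-coloring, every vertex $u$ of $G$ is adjacent to all vertices of some class $V_j$, and in particular to $v_j\in S$. Hence $S$ is a total dominating set of $G$, so $\gamma_t(G)\le|S|\le k=\tdc(G)$.

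\textbf{Upper bound.} Here I would fix a $\gamma_t$-set $D=\{d_1,\dots,d_{\gamma_t(G)}\}$ of $G$ together with a proper coloring $c$ of $G$ using $\chi(G)$ colors. Define a new coloring $c'$ of $G$ by giving each $d_i$ its own private color (a total of $\gamma_t(G)$ new colors, one per vertex of $D$) and retaining $c$ on the vertices of $V\setminus D$. This coloring uses at most $\gamma_t(G)+\chi(G)$ colors. It is proper, since the vertices of $D$ receive pairwise distinct colors that appear nowhere else, while the restriction of $c'$ to $V\setminus D$ is a restriction of the proper coloring $c$. For the total dominator condition, let $u$ be any vertex of $G$; since $D$ is a TD-set, $u$ has a neighbour $d_i\in D$, and the singleton $\{d_i\}$ is a color class of $c'$ all of whose vertices $u$ is adjacent to. Provided this witness class differs from $u$'s own class, $c'$ is a TD-coloring, giving $\tdc(G)\le\gamma_t(G)+\chi(G)$.

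The one subtlety — and the step I would be most careful about — is precisely the requirement that the singleton witness class be different from the vertex's own color class. This can only fail when $u=d_i$; but in that case, since $D$ is a \emph{total} dominating set rather than merely a dominating set, $d_i$ itself has a neighbour $d_j\in D$ with $j\ne i$, and then $\{d_j\}$ serves as the required witness. No individual step is technically hard here; the content of the argument is exactly this interplay between total domination and the use of singleton color classes.
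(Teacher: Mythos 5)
Your proof is correct; note that the paper itself does not prove Theorem~\ref{t:thmKa5} but quotes it from the literature, and your argument is the standard one for this bound: a transversal of the color classes of a $\chi_{\td}$-coloring is a TD-set (lower bound), and assigning each vertex of a $\gamma_t$-set its own singleton color on top of a $\chi(G)$-coloring of the remaining vertices gives a TD-coloring (upper bound). One small phrasing point: the witness class $\{d_i\}$ can never equal the own class of $u$, since $d_i$ is a neighbour of $u$ and so $u \ne d_i$; the place where totality (as opposed to mere domination) is genuinely needed is exactly the one you identify, namely guaranteeing that a vertex $u \in D$ also has a neighbour inside $D$ and hence some singleton class to totally dominate.
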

In Theorem~\ref{t:thmKa5}, if $G$ is a bipartite graph, then $\gamma_t(G) \le \tdc(G) \le \gamma_t(G) + 2$. Both the bounds are tight for bipartite graphs as well as for trees and paths~\cite{TDC1}. Total dominator coloring of various graph classes, including paths, wheel, trees and caterpillars, are studied in~\cite{TDC2,Vi12,TDC3,TDC6}. The total dominator coloring problem is also studied on product graphs and Mycielskian graphs~\cite{TDC5,TDC4}. Further, this problem has been studied on finding the bounds and exact values of $\tdc(G)$ for some graph classes and graph operations~\cite{TDC7,TDC8,TDC10,TDC12,TDC9}. For any arbitrary tree $T$, $\gamma_t(T) \le \tdc(T) \le \gamma_t(T)+2$ and trees having $\tdc(T)=\gamma_t(T)$ are characterized in~\cite{TDC1}. The characterization of trees having $\tdc(T)=\gamma_t(T)+1$ was posed as an open problem in~\cite{TDC1}.

In this paper, we work on the complexity of the \textsc{Total Dominator Coloring} problem for some graph classes, namely chain graphs, cographs, bipartite graphs, planar graphs and split graphs. First, we give a characterization of trees having $\tdc(T)=\gamma_t(T)+1$, that completes the characterization of trees for every possible value of $\tdc(T)$. We remark that the condition given in this characterization cannot be checked in polynomial time. Then, we compute the value of the total dominator chromatic number for both connected and disconnected cographs in linear-time. Next, we show that for a chain graph $G$, $2 \le \tdc(G) \le 4$ and characterize the class of chain graphs for every possible value of $\tdc(G)$ in linear-time. On the other hand, to the best of our knowledge, there is only one hardness result known for the TDCD problem, which states that the TDCD problem is NP-complete for general graphs~\cite{TDC2}. We extend the study of the \textsc{Total Dominator Coloring} problem in this direction by showing that the TDCD problem remains NP-complete even when restricted to planar graphs, connected bipartite graphs and split graphs. This also shows that the TDCD problem remains NP-complete for chordal graphs, as split graphs is a subclass of chordal graphs.

This paper is organised as follows. In Section~2, we define the necessary graph theory notation and mentioned some known results that will be used throughout the paper. In Section~3, we focus on the TD-coloring of trees and characterize the trees $T$ having $\tdc(T)=\gamma_t(T)+1$. In Section~4, we compute $\tdc(G)$ for any cograph $G$. In Section~5, we investigate the dominator coloring and the TD-coloring of chain graphs. In Section~6, we show that the TDCD problem is NP-complete even for planar graphs, connected bipartite graphs and split graphs. Finally, Section~7 concludes the paper.

\section{Preliminaries}
\noindent
Let $G=(V,E)$ be a graph. We assume that all the graphs considered in this paper are simple, non-trivial, isolate-free and undirected. A connected acyclic graph is called a \emph{tree}. In a tree $T$, a degree~$1$ vertex is called a \emph{leaf} and its neighbour a \emph{support vertex}. Let $P_n$ denote the path on $n$ vertices. A graph $G$ is a \emph{cograph} if $P_4$ is not present as an induced subgraph of $G$, that is, $G$ is $P_4$-free. Equivalently~\cite{cograph}, a graph $G$ of order at least~$2$ is a cograph if and only if $G$ or its complement $\barG$ is not connected.

If $D$ is a minimal TD-set of $G$, then the \emph{$D$-private neighbourhood} of a vertex $u \in D$ is the set of vertices that are totally dominated by $u$ only, and is denoted by $\pn(u,D)$. Thus, if $w \in \pn(u,D)$, then $N(w) \cap D = \{u\}$. If $\pn(u,D)=\{w\}$, then $w$ is the only vertex in the $D$-private neighbourhood of~$u$. We define the sets $D_I = \{u\in D \colon \vert \pn(u,D) \vert=1 \}$ and $D_R=D \setminus D_I$. Thus, $D=D_{I}\cup D_{R}$.

An \emph{optimal TD-coloring} of $G$ is a $\chi_{\td}$-coloring of $G$. Let $\cal{H}$ be a $\chi_{\td}$-coloring of $G$. The color class $V_i^{\cal{H}}$ is the set of vertices receiving color $i$ in $\cal{H}$, where $1 \le i \le \tdc(G)$. Let $C^{\cal{H}}=\{ V_1^{ \cal{H}}, V_2^{ \cal{H}}, \ldots, V_{\tdc(G)}^{\cal{H}}\}$ be the collection of color classes of $\cal{H}$. If $ \vert V_i^{\cal{H}} \vert=1$ for a color class $V_i^{\cal{H}}$, then $V_i^{\cal{H}}$ is called a \emph{solitary color class} and the vertex $v\in V_i^{\cal{H}}$ is called a \emph{solitary vertex}. A color class $V_i^{\cal{H}}$ is said to be a \emph{free color class} if every vertex of $G$ totally dominates a color class other than $V_i^{\cal{H}}$. Let 
\begin{enumerate}
\item[$\bullet$] $C_0^{\cal{H}}$ be a minimum cardinality subset of $C^{\cal{H}}$ such that each $u \in V$ totally dominates some color class of $C_0^{\cal{H}}$,  \1
\item[$\bullet$] $C_P^{\cal{H}}$ be the subset of $C^{\cal{H}}$ such that each color class $R \in C_P^{\cal{H}}$ is a solitary color class,  \1
\item[$\bullet$] $C_S^{\cal{H}}$ be the subset of $C^{\cal{H}}$ such that each color class $R \in C_S^{\cal{H}}$ contains more than one vertex and is totally dominated by some vertex of $G$, and \1
\item[$\bullet$] $C_G^{\cal{H}}$ be the subset of $C^{\cal{H}}$ such that each color class $R \in C_G^{\cal{H}}$ contains more than one vertex and is not totally dominated by any vertex of $G$.
\end{enumerate}
The sets $C_P^{ \cal{H}}$, $C_S^{ \cal{H}}$ and $C_G^{ \cal{H}}$ forms a partition of the color classes of $\cal{H}$ and thus, $C^{ \cal{H}}= C_P^{ \cal{H}} \cup C_S^{ \cal{H}} \cup C_G^{ \cal{H}}$. Let $A^{ \cal{H}}$ be the set of solitary vertices in the coloring $\cal{H}$, and let $B^{ \cal{H}}$ be the set of all the vertices in color classes of $C_G^{ \cal{H}}$. Thus,
\begin{enumerate}
\item[$\bullet$] $A^{ \cal{H}}=\{u \in R \colon R \in C_P^{ \cal{H}}\}$,  and \1
\item[$\bullet$] $B^{ \cal{H}}=\{u \in R \colon R \in C_G^{ \cal{H}}\}$.
\end{enumerate}

Let $D_0^{\cal{H}}$ be the set constructed by picking exactly one vertex from each color class of $C_0^{\cal{H}}$. Also, let $D_S^{\cal{H}}$ be the set constructed by picking one vertex from each color class of $C_S^{\cal{H}}$. We note that $\vert C_{S}^{\cal{H}}\vert= \vert D_S^{\cal{H}} \vert$.

\begin{figure}[htb]
\begin{center}
\includegraphics[width=11cm, height=1.2cm]{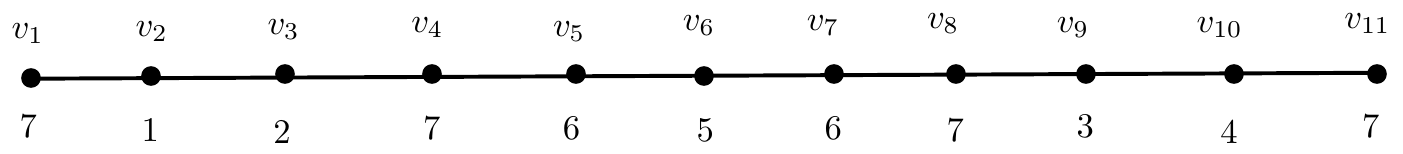}
\caption{A tree $T = P_{11}$ and a $\chi_{\td}$-coloring $\cal{H}$ of $T$}
\label{fig:1}
\end{center}
\end{figure}

We now illustrate the above definitions with an example. Let $T = P_{11}$ be the path $v_1v_2 \ldots v_{11}$ of order~$11$, and let $\cal{H}$ be the $\chi_{\td}$-coloring of $T$ given in Fig. 1. In this example, the following properties hold in the tree $T$.
\begin{enumerate}
\item[$\bullet$] $\tdc(T)=7$, \1
\item[$\bullet$] $V_1^{\cal{H}}=\{v_2\}$, $V_2^{\cal{H}}=\{v_3\}$, $V_3^{\cal{H}}=\{v_9\}$, $V_4^{\cal{H}}=\{v_{10}\}$, $V_5^{\cal{H}}=\{v_6\}$, $V_6^{\cal{H}}=\{v_5, v_7\}$, and $V_7^{\cal{H}}=\{v_1, v_4, v_8, v_{11}\}$ are the color classes, \1
\item[$\bullet$] $C^{\cal{H}}=\{V_1^{\cal{H}}, V_2^{\cal{H}} ,V_3^{\cal{H}},V_4^{\cal{H}}, V_5^{\cal{H}}, V_6^{\cal{H}},V_7^{\cal{H}}\}$, \1
\item[$\bullet$] $C_0^{\cal{H}} = \{V_1^{\cal{H}}, V_2^{\cal{H}}, V_3^{\cal{H}},V_4^{\cal{H}}, V_5^{\cal{H}}, V_6^{\cal{H}}\}$, \1
\item[$\bullet$] The solitary vertices are $v_2,v_3,v_6,v_9,v_{10}$, \1
\item[$\bullet$] $C_P^{\cal{H}}=\{V_1^{\cal{H}}, V_2^{\cal{H}} ,V_3^{\cal{H}},V_4^{\cal{H}}, V_5^{\cal{H}}\}$, \1
\item[$\bullet$] $C_S^{\cal{H}}=\{V_6^{\cal{H}}\}$, \1
\item[$\bullet$] $C_G^{\cal{H}}=\{V_7^{\cal{H}}\}$, \1
\item[$\bullet$] $A^{\cal{H}}=\{v_2,v_3,v_6,v_9,v_{10}\}$, \1
\item[$\bullet$] $B^{\cal{H}}=\{v_1,v_4,v_8,v_{11}\}$, \1
\item[$\bullet$] $D_S^{\cal{H}}=\{v_5\}$, \1
\item[$\bullet$] $D_0^{\cal{H}}=\{v_2,v_3,v_5,v_6,v_9,v_{10}\}$, \1
\item[$\bullet$] For the dominating set $D_0^{\cal{H}}$, $D_I=\{v_3,v_5\}$, and $D_R=\{v_2,v_6,v_9,v_{10}\}$.
\end{enumerate}

For planar graph $G$, $\chi(G) = 4$. Using Theorem~\ref{t:thmKa5}, we have $\gamma_t(G) \le \tdc(G) \le \gamma_t(G) + 4$. Now, we formally state the results regarding bounds on $\tdc(G)$ for bipartite and planar graphs.
\begin{co}
\label{planarbound}
The following properties hold. 
\begin{enumerate}
\item[{\rm (a)}] If $G$ is a bipartite graph, then $\gamma_t(G) \le \tdc(G) \le \gamma_t(G) + 2$.
\item[{\rm (b)}] If $G$ is a planar graph, then $\gamma_t(G) \le \tdc(G) \le \gamma_t(G) + 4$.
\end{enumerate}
\end{co}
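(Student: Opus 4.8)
The plan is to derive this statement directly from Theorem~\ref{t:thmKa5}, which for every isolate-free graph $G$ gives $\max\{\gamma_t(G),\chi(G)\}\le \tdc(G)\le \gamma_t(G)+\chi(G)$; all that is needed is to insert the standard upper bound on $\chi(G)$ available in each of the two graph classes. So the proof will consist of three short steps: (i) extract the common lower bound, (ii) handle the bipartite upper bound, and (iii) handle the planar upper bound.

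For step (i), since $\gamma_t(G)\le \max\{\gamma_t(G),\chi(G)\}$, the left inequality of Theorem~\ref{t:thmKa5} immediately yields $\gamma_t(G)\le \tdc(G)$ for every isolate-free graph, in particular for all bipartite and all planar graphs, which settles the left-hand inequalities in both (a) and (b). For step (ii), I would observe that an isolate-free bipartite graph $G$ has $\chi(G)=2$: it contains an edge, so $\chi(G)\ge 2$, while the two sides of the bipartition form a proper $2$-colouring, so $\chi(G)\le 2$. Plugging $\chi(G)=2$ into the right inequality of Theorem~\ref{t:thmKa5} gives $\tdc(G)\le \gamma_t(G)+2$, proving (a). For step (iii), I would invoke the Four Colour Theorem: every planar graph $G$ satisfies $\chi(G)\le 4$, and substituting this into the right inequality of Theorem~\ref{t:thmKa5} gives $\tdc(G)\le \gamma_t(G)+\chi(G)\le \gamma_t(G)+4$, proving (b).

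There is essentially no obstacle here; the only non-elementary ingredient is the Four Colour Theorem, which is used purely as a black box for the upper bound in~(b). (If one preferred to avoid it, the five-colour theorem would only give the weaker bound $\gamma_t(G)+5$, so the $+4$ bound genuinely relies on the Four Colour Theorem.) Everything else is a one-line substitution into the already-established bounds of Theorem~\ref{t:thmKa5}.
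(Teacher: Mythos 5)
Your proposal is correct and matches the paper's own derivation: the paper obtains both parts of Corollary~\ref{planarbound} exactly by substituting $\chi(G)\le 2$ for bipartite graphs and $\chi(G)\le 4$ for planar graphs (via the Four Colour Theorem) into the bounds of Theorem~\ref{t:thmKa5}. No further comment is needed.
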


Since trees are a subclass of bipartite graph, the bounds in Corollary~\ref{planarbound}(a) hold if $G$ is a tree. We note that both the bounds in Corollary~\ref{planarbound}(a) are achievable for bipartite graphs as well as for trees.

%
%

For the characterization of trees having $\tdc(T) =\gamma_t(T)$ given in \cite{TDC1}, a family of trees $\cT$ is constructed as: $\cT=P_2 \cup \{\text{trees obtained by}$ connecting $k \geq 1$ disjoint stars of order at least three using $(k-1)$ edges joining leaf vertices such that the center of each original star remains a stem$\}$. The following results are known for trees.
\begin{thm}{\rm \cite{TDC1}}
\label{t:tree-1}
For a tree $T$, $\gamma_t(T) = \tdc(T)$ if and only if $T \in \cT$.
\end{thm}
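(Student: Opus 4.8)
The bound $\gamma_t(T)\le\tdc(T)$ is immediate from Theorem~\ref{t:thmKa5}, so the entire content of the statement is the characterization of equality, and I would prove the two implications separately. For the easy direction, assume $T\in\cT$. If $T=P_2$ then $\gamma_t(T)=\tdc(T)=2$, so suppose $T$ is obtained from $k\ge 1$ disjoint stars $K_{1,n_1},\dots,K_{1,n_k}$ with all $n_i\ge 2$ by adding $k-1$ edges between leaves, each star centre $c_i$ keeping a pendant leaf. First I would compute $\gamma_t(T)=2k$: each centre is forced into every TD-set (it has a pendant leaf with no other neighbour), each centre additionally needs a neighbour in the TD-set, and neighbours of distinct centres are distinct (a connecting edge joins leaves of two different stars), so every TD-set has at least $2k$ vertices, while the $k$ centres together with one pendant leaf of each form one. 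Then I would produce a TD-coloring with exactly $2k$ colors: give $c_i$ the private color $i$ and color all of $N(c_i)$ with one new color $k+i$. Since $N(c_1),\dots,N(c_k)$ are pairwise disjoint, cover every non-centre vertex, and are each independent, this coloring is proper; moreover each $c_i$ totally dominates its class $N(c_i)$, each leaf totally dominates the singleton class of its support, and each connecting vertex totally dominates the singleton class of the unique centre it is adjacent to. Hence $\tdc(T)\le 2k=\gamma_t(T)$, and equality follows.

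For the converse, assume $\tdc(T)=\gamma_t(T)=m$ and fix an optimal TD-coloring $\mathcal{H}$. I would start from two observations. (i) Every support vertex is solitary in $\mathcal{H}$: a leaf can totally dominate only a color class contained in the one-element set consisting of its support, so that support must form a color class by itself. (ii) Picking one vertex from each class of $C_0^{\mathcal{H}}$ gives a TD-set of size $|C_0^{\mathcal{H}}|\le m=\gamma_t(T)$; hence $C_0^{\mathcal{H}}=C^{\mathcal{H}}$, the set $D_0^{\mathcal{H}}$ is a $\gamma_t$-set, and by minimality of $C_0^{\mathcal{H}}$ every color class $R$ has a witness vertex that totally dominates $R$ and no other class. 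Then I would argue by induction on $|V(T)|$: pick an endvertex of a longest path, so that its neighbour $s$ is a support all of whose neighbours, except its unique ancestor $p$ along the path, are leaves. Using (i) and (ii), I would show that the star consisting of $s$ and $N_T(s)$ is one of the stars of the $\cT$-construction, that $p$ is a connecting vertex joining it to the rest of $T$ by leaf--leaf edges (any other adjacency at $p$ would create a color class with no admissible witness or would make $D_0^{\mathcal{H}}$ non-minimal; in particular two support vertices can never be adjacent), that $s$ retains a pendant leaf, and that deleting this star (its centre, its pendant leaves, and the degree-two connecting vertex attaching it to the rest of $T$) yields a tree $T'$ with $\tdc(T')=\gamma_t(T')$. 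The induction hypothesis then gives $T'\in\cT$, and restoring the deleted star is exactly an instance of the $\cT$-construction, so $T\in\cT$. The base cases $T=P_2$ and $T=K_{1,n}$ are handled directly.

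The main obstacle is this last structural step. The forward direction and observations (i)--(ii) are routine, but converting the single numerical equality $\tdc(T)=\gamma_t(T)$ into the rigid star-assembly shape is delicate: one must rule out induced paths and degree-two chains that are not leaf--leaf connections (already $P_4$ and $P_5$ need a separate argument), one must prove that two supports cannot be adjacent, which genuinely uses essentiality of color classes rather than a crude count of colors (adjacent supports ``cover'' one another in a TD-coloring), and --- the crux of the induction --- one must check that the pendant-star deletion decreases \emph{both} $\gamma_t$ and $\tdc$ by exactly two, with the reattachment keeping every original centre a stem so that the resulting tree really belongs to $\cT$. I expect essentially all the work to be in this bookkeeping; note that $\gamma_t(T)=2\cdot(\text{number of supports})$ holds for some trees outside $\cT$ (e.g.\ $P_7$), so a counting argument alone cannot suffice.
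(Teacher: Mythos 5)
First, a point of comparison: the paper does not prove this statement at all --- Theorem~\ref{t:tree-1} is quoted from~\cite{TDC1} and used as a black box --- so there is no in-paper argument to measure yours against; it must be judged on its own. On that basis, your forward direction is correct and complete: the centres are forced into every TD-set by their pendant leaves, the pairwise disjoint neighbourhoods $N(c_1),\dots,N(c_k)$ force $k$ further vertices, giving $\gamma_t(T)=2k$, and your explicit coloring (a solitary color on each centre, one common color on each $N(c_i)$) is proper and is a TD-coloring with $2k$ colors, so equality follows from Theorem~\ref{t:thmKa5}. Your preparatory observations for the converse are also sound: supports are solitary, $C_0^{\cal H}=C^{\cal H}$, the transversal $D_0^{\cal H}$ is a $\gamma_t$-set, and minimality of $C_0^{\cal H}$ gives each color class a private witness.

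The converse, however, is a plan rather than a proof, and you say as much. The entire content of the characterization lies in the structural step you defer: proving that no two supports are adjacent, that the star around the support $s$ at the end of a longest path keeps a pendant leaf and is attached to the rest of $T$ by a single leaf--leaf edge, and --- the crux of your induction --- that deleting this star produces a tree $T'$ with $\gamma_t(T')=\gamma_t(T)-2$ \emph{and} $\tdc(T')=\tdc(T)-2$. Neither equality is automatic: the restriction of $\cal H$ to $T'$ need not be a TD-coloring (a vertex of $T'$ adjacent to a deleted vertex may have totally dominated only a class of deleted vertices), so you must actually build a TD-coloring of $T'$ with two fewer colors, and you must separately show $\gamma_t(T')\ge\gamma_t(T)-2$ before the induction hypothesis applies. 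The case analysis you gesture at (whether the connecting vertex $p$ lies in $s$'s star or in the adjacent star, which changes what $T'$ is; the small trees $P_4$, $P_5$, stars, and the $k=1$ case) is exactly where such arguments typically break or require new ideas, and your own closing remark that a counting argument cannot suffice (e.g.\ $P_7$) confirms that the missing step is substantive. As it stands, the hard half of the equivalence is unproven, so the proposal has a genuine gap.
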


\begin{thm}{\rm \cite{TDC1}}
\label{t:tree-2}
For a tree $T \notin \cT$, the following statements holds:
\begin{enumerate}
\item[{\rm (a)}] If $\tdc(T) = \gamma_t(T) + 1$, then $T$ admits a $TDC$ using $\chi_d^t(T)$ colors having a free color class.
\item[{\rm (b)}] If $\tdc(T) = \gamma_t(T) + 2$, then $T$ admits a $TDC$ using $\chi_d^t(T)$ colors having two free color classes.
\end{enumerate}
\end{thm}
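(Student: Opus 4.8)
The plan is to reduce both parts to two elementary facts about optimal TD-colorings and then to separate off the one situation that requires real work.

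First I would record, for an arbitrary optimal TD-coloring $\cal{H}$ of $T$ with $k = \tdc(T)$ colors: (i) $D_0^{\cal{H}}$ is a TD-set of $T$, since every vertex totally dominates some class of $C_0^{\cal{H}}$ and is therefore adjacent to the representative of that class in $D_0^{\cal{H}}$, whence $|C_0^{\cal{H}}| = |D_0^{\cal{H}}| \ge \gamma_t(T)$; and (ii) every color class in $C^{\cal{H}} \setminus C_0^{\cal{H}}$ is a free color class, because every vertex totally dominates some class of $C_0^{\cal{H}}$, which differs from every class outside $C_0^{\cal{H}}$. Hence $\cal{H}$ has at least $k - |C_0^{\cal{H}}|$ free color classes, and in particular at least one whenever $|C_0^{\cal{H}}| = \gamma_t(T)$. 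Alongside this I would set up the canonical construction from a $\gamma_t$-set $D$ (which has $V \setminus D \ne \emptyset$, as $T \notin \cT$ forces $T \ne P_2$): assign to the $\gamma_t(T)$ vertices of $D$ pairwise distinct colors $1, \ldots, \gamma_t(T)$, and properly color the forest $T[V \setminus D]$ using only the new color $\gamma_t(T)+1$ if $V \setminus D$ is independent, and the two new colors $\gamma_t(T)+1, \gamma_t(T)+2$ otherwise. One checks at once that this is always a TD-coloring (each vertex is adjacent to some $d_i \in D$, and $\{d_i\}$ is a singleton class), that it uses $\gamma_t(T)+1$ colors if $V \setminus D$ is independent and $\gamma_t(T)+2$ colors otherwise, and that in either case the one, respectively two, color classes with colors exceeding $\gamma_t(T)$ are free.

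For part (b): if $\tdc(T) = \gamma_t(T)+2$, run the canonical construction on a $\gamma_t$-set $D$. If $V \setminus D$ were independent we would get a TD-coloring with $\gamma_t(T)+1 < \tdc(T)$ colors, which is impossible; hence $T[V \setminus D]$ has an edge, the construction yields a TD-coloring with exactly $\gamma_t(T)+2 = \tdc(T)$ colors, and its two classes of colors $\gamma_t(T)+1, \gamma_t(T)+2$ are free. For part (a), first the easy case: suppose $\tdc(T) = \gamma_t(T)+1$. If some $\gamma_t$-set $D$ has $V \setminus D$ independent, then the canonical construction produces an optimal TD-coloring with $\gamma_t(T)+1$ colors whose class of color $\gamma_t(T)+1$ is free, and we are done. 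Otherwise take any optimal TD-coloring $\cal{H}$ (with $\gamma_t(T)+1$ colors); if $|C_0^{\cal{H}}| = \gamma_t(T)$ we are done by fact (ii). The only remaining possibility is $|C_0^{\cal{H}}| = \gamma_t(T)+1 = k$; then $\cal{H}$ has no free color class, $C_G^{\cal{H}} = \emptyset$ (indeed $C_0^{\cal{H}} \subseteq C_P^{\cal{H}} \cup C_S^{\cal{H}}$ and $|C_0^{\cal{H}}| = |C^{\cal{H}}|$), and $D_0^{\cal{H}}$ is a non-minimum TD-set of size $\gamma_t(T)+1$.

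In this last case the plan is to replace $\cal{H}$ by another optimal TD-coloring $\cal{H}'$ with $|C_0^{\cal{H}'}| = \gamma_t(T)$, and hence with a free color class: starting from a $\gamma_t$-set $D$, color its vertices with distinct colors $1, \ldots, \gamma_t(T)$, and color $V \setminus D$ with the single new color $\gamma_t(T)+1$ except that on each edge of $T[V \setminus D]$ exactly one endpoint is recolored with a color from $\{1, \ldots, \gamma_t(T)\}$, chosen to keep the coloring proper while still leaving every vertex totally dominating some color class. Making these choices simultaneously consistent is precisely where the hypothesis $T \notin \cT$ is used: the description of $\cT$ (disjoint stars of order at least three joined by edges between leaves, with the original centers kept as support vertices) is exactly the configuration leaving no ``room'' for such a recoloring, so when $T \notin \cT$ one should be able to localize the argument at a suitable leaf and its support vertex, possibly invoking the structural analysis behind Theorem~\ref{t:tree-1}, and finish with a short case distinction. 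I expect this recoloring step, that is, showing that every tree with $\tdc(T) = \gamma_t(T)+1$ admits an optimal TD-coloring with a free color class, to be the main obstacle; part (b) and the easy case of part (a) follow essentially at once from the two facts recorded above.
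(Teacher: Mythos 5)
Your two preliminary facts are correct and useful: $D_0^{\cal H}$ is indeed a TD-set, so $\vert C_0^{\cal H}\vert \ge \gamma_t(T)$, and every color class outside $C_0^{\cal H}$ is free. The canonical construction from a $\gamma_t$-set $D$ (distinct colors on $D$, one or two new colors on the forest $T[V\setminus D]$) is also sound, and with it your proof of part~(b) is complete and correct: if $V\setminus D$ were independent one would get a TD-coloring with $\gamma_t(T)+1<\tdc(T)$ colors, so the construction uses exactly $\gamma_t(T)+2$ colors and its two new classes are free. The same goes for the two easy subcases of part~(a) (some $\gamma_t$-set with independent complement, or an optimal coloring with $\vert C_0^{\cal H}\vert=\gamma_t(T)$). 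Note that the paper itself offers no proof to compare against -- Theorem~\ref{t:tree-2} is quoted from~\cite{TDC1} -- so your argument has to stand on its own.

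It does not, because the remaining case of part~(a) -- every optimal coloring has $\vert C_0^{\cal H}\vert=\tdc(T)$ (no free class) and no $\gamma_t$-set has independent complement -- is exactly the substance of the theorem, and for it you give only a plan (``one should be able to localize the argument at a suitable leaf \dots finish with a short case distinction''), which you yourself flag as the main obstacle. The recoloring you sketch faces a genuine obstruction, not just bookkeeping: if a vertex $u\in V\setminus D$ is moved into the singleton class $\{d\}$ of some $d\in D$, then since $T$ is a tree at most one vertex is adjacent to both $d$ and $u$, so every vertex whose only neighbour in $D$ is $d$ (other than possibly that one common neighbour) loses the only class it was guaranteed to totally dominate and must be rescued elsewhere; doing this simultaneously for one endpoint of every edge of $T[V\setminus D]$, while keeping the coloring proper, is precisely the kind of structural condition that can fail -- indeed the paper's own Theorem~\ref{tree:char} shows that the existence of such a coloring hinges on nontrivial conditions on a $\gamma_t$-set (the sets $\pn(v,D)$, absence of bad vertices, independence of the leftover set), and it is exactly where the hypothesis $T\notin\cT$ must be exploited. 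As written, your submission proves (b) and a fragment of (a); the core case of (a) would have to be proved from scratch (e.g., along the lines of~\cite{TDC1}) or cited, so there is a genuine gap.
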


\section{Characterization of trees $T$ having $\tdc(T)=\gamma^{t}(T)+1$}
\noindent
Throughout this section, we assume that $T=(V,E)$ is a non-trivial tree. By Corollary~\ref{planarbound}(a), $\tdc(T)$ takes one of the three values $\gamma^{t}(T)$, $\gamma^{t}(T)+1$ or $\gamma^{t}(T)+2$. Further, it is shown in~\cite{TDC1} that there are infinitely many trees for each value of $\tdc(T)$. Also, recall that Theorem~\ref{t:tree-1} gives a characterization of the trees $T$ satisfying $\gamma_t(T) = \tdc(T)$. In this section, we characterize trees $T$ satisfying $\gamma_t(T) = \tdc(T) + 1$, thereby completing a characterization of trees $T$ having every possible value of $\tdc(T)$.

We first prove properties of $\tdc$-colorings of a tree.
\begin{Prop}
\label{P1}
Every support vertex in any $\tdc$-coloring of a tree $T$ is solitary.
\end{Prop}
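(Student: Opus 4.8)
The plan is to use the defining property of a leaf, namely that it has a unique neighbour, together with the requirement in a TD-coloring that every vertex be adjacent to all vertices of some colour class other than its own. First I would fix an arbitrary $\tdc$-coloring $\mathcal{H}$ of $T$ and let $v$ be a support vertex. By definition of a support vertex, $v$ has a leaf neighbour $\ell$, so that $N_T(\ell)=\{v\}$ and $\ell \neq v$. Since $\mathcal{H}$ is a TD-coloring, $\ell$ is adjacent to every vertex of some colour class $R$ of $\mathcal{H}$ with $R$ different from the colour class containing $\ell$.

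The key observation is then that $R\subseteq N_T(\ell)=\{v\}$, because $\ell$ is adjacent to every vertex of $R$ and $v$ is the only neighbour of $\ell$. As every colour class is non-empty, this forces $R=\{v\}$. Hence the colour class of $v$ in $\mathcal{H}$ is the singleton $\{v\}$, which is exactly the statement that $v$ is a solitary vertex. Since $\mathcal{H}$ was an arbitrary $\tdc$-coloring and $v$ an arbitrary support vertex, the proposition follows.

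I do not expect a genuine obstacle here: once one notices that a leaf of degree~$1$ can only ``totally dominate'' the singleton class consisting of its support vertex, the argument is essentially a single line. The only points that need a word of care are that the dominated class $R$ is indeed a colour class distinct from $\ell$'s own class and that $R$ is non-empty, both of which are immediate from the definitions of a TD-coloring and of a colour class. (If $v$ happened to have several leaves, any one of them yields the same conclusion, so no extra work is needed.)
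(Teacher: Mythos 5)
Your proof is correct and uses the same key idea as the paper, namely that a leaf's only neighbour is its support vertex, so the only colour class a leaf can totally dominate is the singleton consisting of that support vertex. The paper phrases this as a proof by contradiction (assuming a non-solitary support vertex and noting its leaf then dominates no class), but this is only a cosmetic difference from your direct argument.
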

\begin{proof}
Let $\cal{H}$ be a $TD$-coloring of a tree $T$ using $\chi_d^t(T)$ colors. On the contrary, assume that there exists a stem $v$ which is not solitary. Since $v$ is a stem, there must be a leaf vertex $x$ adjacent to this stem, which is not adjacent to any other vertex of $T$. In any $TD$-coloring of $T$, each vertex of $T$ is properly colored and totally dominates some color class. Let $v$ belongs to color class $R$, which is not solitary. Then, the vertex $x$ is not totally dominating any color class, which is a contradiction to the fact that $\mathcal{H}$ is a $TD$-coloring  of $T$ using $\tdc$ colors. Hence, the result follows.
%
\end{proof}

In the next result, we consider the trees having at least three vertices and we establish the existence of an optimal TD-coloring such that leaves that are adjacent to same support vertex can be given same color.
\begin{Prop}
\label{P2}
If $T$ is a tree of order~$n \ge 3$, then there exists a $\tdc$-coloring of $T$ such that all leaf neighbours of a support vertex belong to the same color class.
\end{Prop}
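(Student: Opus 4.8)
The plan is to start from an arbitrary $\tdc$-coloring $\mathcal{H}$ of $T$ and modify it so that, for each support vertex $s$, all leaf neighbours of $s$ receive one common color, while keeping the total number of colors unchanged and preserving both properness and the total-domination-of-some-color-class property. First I would fix a support vertex $s$ with leaf neighbours $\ell_1, \dots, \ell_k$ (with $k \ge 2$, otherwise there is nothing to do for $s$). By Proposition~\ref{P1}, $s$ is solitary in $\mathcal{H}$, so $s$ is the unique vertex of its color class; in particular no $\ell_i$ shares a color with $s$, and any recoloring of the $\ell_i$'s among colors different from that of $s$ keeps the coloring proper at the edges $s\ell_i$ (the $\ell_i$ have no other neighbours). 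The idea is to pick one of the colors already used on some $\ell_i$, say the color $c$ of $\ell_1$, and recolor every $\ell_j$ with color $c$.

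The key point to verify is that this recoloring does not destroy the TD-coloring property. I would argue this in two parts. (i) No vertex loses the color class it was totally dominating. The only vertices whose totally-dominated color class could be affected are those adjacent to some $\ell_j$, i.e.\ only $s$ (since leaves have no other neighbours), and those whose totally-dominated class was the old class of some $\ell_j$. Since $s$ is solitary, $s$ totally dominates a singleton class, and to keep $s$ happy it suffices to note $s$ still has a neighbour in some color class other than its own — namely $\ell_1$ in class $c$, which after recoloring becomes $\{\ell_1,\dots,\ell_k\}$ (or $s$ could have been totally dominating some class via a non-leaf neighbour, which is untouched); in the worst case one checks that $s$ can instead be made to totally dominate the singleton class it had before if that class was not a leaf-class, and if it was, then $s$ totally dominates $\{\ell_1,\dots,\ell_k\}$. (ii) For a vertex $w$ that was totally dominating the old color class of some $\ell_j$: either that old class still contains a vertex adjacent to $w$ after removing the $\ell_j$'s (so $w$ is still fine), or the old class of $\ell_j$ becomes empty or loses all of $w$'s neighbours. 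The remaining emptied color slots are then reassigned: after recoloring, some colors may no longer be used; but since we only want "at most the same number of colors," I would simply observe that the new coloring uses no more colors than $\mathcal{H}$, and since $\mathcal{H}$ was optimal it still uses exactly $\tdc(T)$ colors, hence every color is still present and the above reassignment subtlety disappears — any color that became unused would contradict optimality. This is the cleanest way to close the argument: optimality of $\mathcal{H}$ forces every color class to remain nonempty after the merge, so no vertex that relied on an emptied class can exist.

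I would then iterate this merging over all support vertices; the operations on distinct support vertices are independent because the leaf sets of distinct support vertices are disjoint and a recoloring of the leaves of $s$ never changes the color of any leaf of a different support vertex, nor the color of any support vertex. After processing all support vertices we obtain a $\tdc$-coloring in which all leaf neighbours of each support vertex lie in a single color class, as required.

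The main obstacle I anticipate is case~(i): ensuring the support vertex $s$ itself still totally dominates some color class after its leaves are merged. The merge could collapse the specific singleton color class that $s$ was totally dominating (if that class was exactly some $\ell_j$). Handling this cleanly requires the observation that after the merge $s$ is adjacent to every vertex of the new class $\{\ell_1,\dots,\ell_k\}$ (since every $\ell_i$ is a leaf-neighbour of $s$), so $s$ totally dominates that class — this is precisely why merging into a leaf-color, rather than into an arbitrary color, is the right move. The rest of the argument is routine bookkeeping once optimality is invoked to keep all color classes nonempty.
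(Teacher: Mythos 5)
Your overall plan (merge the leaf neighbours of each support vertex into one existing leaf color and check that the TD-property survives) is the same as the paper's, but your specific choice of merge color leaves a genuine gap at exactly the point you flag as the main obstacle. You always merge the leaves of $s$ into the color $c$ of $\ell_1$. Suppose the only color class that $s$ totally dominates is a class contained in the leaf set of $s$ whose color differs from $c$ (for instance a singleton $\{\ell_j\}$ with $j \ne 1$). That class is emptied by the merge, and your fallback claim that ``$s$ totally dominates $\{\ell_1,\dots,\ell_k\}$'' is not justified: the new class of color $c$ is not $\{\ell_1,\dots,\ell_k\}$ but the \emph{old} class of color $c$ together with the leaves of $s$, and the old class of color $c$ may contain vertices elsewhere in the tree that are not adjacent to $s$. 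So after your recoloring $s$ may totally dominate no class at all. Your attempt to close this via optimality is circular: an emptied color contradicts the minimality of $\tdc(T)$ only if the resulting coloring is still a TD-coloring, and that is precisely what is in doubt when $s$ relied on the emptied class. (Your step (ii) is harmless but loosely phrased: since total domination of a class means adjacency to \emph{every} vertex of it, a class that only loses vertices can fail its dominator only by becoming empty.)

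The fix -- and this is what the paper does -- is to choose the merge color according to a color class ${\cal H}_2$ that $s$ totally dominates: if some leaf neighbour of $s$ already carries the color of ${\cal H}_2$, merge all leaf neighbours of $s$ into that color (the enlarged class still lies inside $N(s)$, so $s$ keeps totally dominating it); otherwise merge into an existing leaf color, which leaves ${\cal H}_2$ untouched. With this choice $s$ is always safe, and the rest of your bookkeeping does go through, since any vertex that totally dominates a class containing or gaining a leaf of $s$ must be adjacent to that leaf and hence equals $s$. The paper also organizes the argument extremally (choosing a $\tdc$-coloring minimizing the number of offending support vertices) rather than iterating over support vertices as you do; that difference is immaterial, but the choice of merge color is not, and as written your proof does not establish the proposition.
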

\begin{proof}
Among all $\tdc$-colorings of the tree $T$, let $\cal{H}$ be chosen so that the number of support vertices in $T$ whose leaf neighbours are not all colored with the same color is minimum. Let $u$ be an arbitrary support vertex of $T$, and let $u$ have color~$1$ in the coloring $\cal{H}$. Further, let ${\cal{H}}_1$ be the color class that contains~$u$. Suppose that the leaf neighbours of $u$ does not belong to the same color class. By Proposition~\ref{P1}, ${\cal{H}}_1$ is a solitary color class, and so ${\cal{H}}_1 = \{u\}$. The vertex $u$ necessarily totally dominates some color class, say ${\cal{H}}_2$ and let every vertex in ${\cal{H}}_2$ is colored with color~$2$. If some leaf neighbour of $u$ is colored~$2$, then recolor all the leaf neighbours of $u$ with color~$2$. If no leaf neighbour of $u$ is colored~$2$, then recolor all the leaf neighbours of $u$ with an existing color used to color one of the leaf neighbours of~$u$. Let ${\cal{H}}'$ be the resulting coloring of the vertices of $T$. This produces a $\tdc$-coloring of $T$ with fewer support vertices whose leaf neighbours are not all colored with the same color, contradicting our choice of the $\tdc$-coloring $\cal{H}$. Therefore, the $\tdc$-coloring $\cal{H}$ colors all leaf neighbours of a support vertex with the same color.
\end{proof}

We note that it is not necessarily true that if $T$ is a tree with~$n \ge 3$, then there exists a $\tdc$-coloring of $T$ that colors all leaves with the same color. For example, if $T$ is a path $P_6$ and ${\cal{C}}$ is a TD-coloring that colors both leaves with the same color, then an additional four colors are needed for ${\cal{C}}$ to be a TD-coloring. Such a TD-coloring, therefore uses five colors. However, $\tdc(T) = \gamma_t(T) = 4$, and so ${\cal{C}}$ is not a $\tdc$-coloring of $T$. We remark, however, that $P_6$ belongs to the tree family $\cT$ defined earlier and, by Theorem~\ref{t:tree-1}, a tree $T$ belongs to this family $\cT$ if and only if $\gamma_t(T) = \tdc(T)$. We show next that if $T$ is a tree that does not belong to the family $\cT$, then there does exist a $\tdc$-coloring of $T$ that colors all the leaves with the same color.

\begin{Prop}
\label{P3}
If $T$ is a tree and $T \notin \cT$, then there exists a $\tdc$-coloring of $T$ that colors all leaves with the same color.
\end{Prop}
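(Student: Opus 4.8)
The plan is to start from the colouring guaranteed by Theorem~\ref{t:tree-2} and then push every leaf into a single colour class. Since $T\notin\cT$, Theorem~\ref{t:tree-1} gives $\tdc(T)>\gamma_t(T)$, so $\tdc(T)$ equals $\gamma_t(T)+1$ or $\gamma_t(T)+2$, and in either case Theorem~\ref{t:tree-2} supplies a $\tdc$-colouring $\cal{H}$ of $T$ possessing a free colour class $V_f$, of colour $f$; by the recolouring argument in the proof of Proposition~\ref{P2} (which changes only leaves and hence cannot destroy freeness of $V_f$) we may also assume the leaf neighbours of each support vertex are monochromatic in $\cal{H}$. If $T$ is a star the statement is trivial and $T\in\cT$, a contradiction; so $T$ has at least two support vertices, and since $T$ is connected every support vertex has a non-leaf neighbour.

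First I would observe that no support vertex receives colour $f$: if a support $v$ lay in $V_f$, then $V_f=\{v\}$ by Proposition~\ref{P1}, but each leaf of $v$ is adjacent only to $v$ and so can totally dominate only $\{v\}=V_f$, contradicting freeness. Consequently, recolouring \emph{every} leaf of $T$ with colour $f$ yields a \emph{proper} colouring $\cal{H}'$ in which each leaf still totally dominates the (still solitary) class of its support. Next I would classify the support vertices by the classes they totally dominate in $\cal{H}$: call $s$ \emph{good} if it totally dominates some class $C\ne V_f$ containing a non-leaf, and \emph{bad} otherwise. If $s$ is good, then in $\cal{H}'$ that class shrinks to $C$ minus its leaves, a nonempty subset of $N(s)$ of colour $\ne f$, so $s$ remains satisfied. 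If $s$ is bad, then by freeness $s$ totally dominates a non-$V_f$ class which, being all leaves lying in $N(s)$ and using the Proposition~\ref{P2} property, must be exactly $L(s)$; hence $L(s)$ is a colour class consisting only of leaves, and distinct bad supports own distinct, purely-leaf colour classes, all of which vanish in $\cal{H}'$.

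The repair is then to re-animate each bad support with a fresh colour while staying inside the colour budget released by these vanished leaf-classes. For a bad support $s$ whose only totally dominated non-$V_f$ class is $L(s)$, I would put a fresh colour on one non-leaf neighbour $u_s$ of $s$, whose singleton class is then totally dominated by $s$; for all bad supports that in addition totally dominate $V_f$, a single fresh colour placed on one vertex $z\in V_f$ works simultaneously, since each such support is adjacent to all of $V_f$. A short count shows that the number of fresh colours introduced is at most the number of vanished purely-leaf classes, so the resulting colouring $\cal{H}''$ uses at most $\tdc(T)$ colours and, being a valid TD-colouring, exactly $\tdc(T)$; and by construction all leaves carry colour $f$.

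The step I expect to be the main obstacle is the final verification that $\cal{H}''$ really is a TD-colouring: one must confirm that no non-leaf, non-support vertex — nor a support vertex that happens to have been chosen as some $u_s$ — is left without a class to totally dominate once the leaves are absorbed into $V_f$ and the fresh colours are introduced, and one must make the colour accounting tight (choosing the $u_s$ carefully when several options exist). This is precisely where the hypothesis $T\notin\cT$ — equivalently, the availability of a free colour class together with $T$ not being a star — must be used in an essential way; the $P_6$ example discussed above shows the conclusion genuinely fails without it.
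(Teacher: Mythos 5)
Your proposal follows essentially the same strategy as the paper's proof: start from a $\tdc$-colouring with a free colour class $V_f$ supplied by Theorem~\ref{t:tree-2}, observe that no support vertex can lie in $V_f$ (via Proposition~\ref{P1} and freeness), recolour every leaf with the colour $f$, and repair each support vertex whose only totally dominated class other than $V_f$ consisted of its own leaf neighbours by recolouring one of its non-leaf neighbours. Where you genuinely differ is the bookkeeping of the repair: the paper recolours that non-leaf neighbour with the \emph{existing} colour $s$ of a class the support totally dominates, so the number of colours never increases and no counting is needed, whereas you spend \emph{fresh} colours and offset them against the vanished all-leaf classes. Your count is valid (each ``bad'' support supplies its own vanished all-leaf class, and one fresh colour per bad support suffices), but it is an extra layer that the paper's reuse of the colour $s$ makes unnecessary; also, the preliminary appeal to Proposition~\ref{P2} is not really needed, since a bad support's dominated class is in any case a nonempty set of its leaf neighbours and hence vanishes.

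Two concrete points. First, the shortcut of one shared fresh colour on a vertex $z \in V_f$ should be dropped: nothing forces $V_f$ to contain a non-leaf, so $z$ may be a leaf, and recolouring it away from colour $f$ defeats exactly the property you are proving; as your own count shows, this shortcut is not needed. Second, the verification you defer as ``the main obstacle'' does go through and should be written out, since it is the heart of the proof: by freeness, every vertex $x$ totally dominates in $\cal{H}$ some class $C \ne V_f$; since no old colour other than $f$ ever gains a vertex, either some member of $C$ keeps its colour and $x$ totally dominates the shrunken class, or every member of $C$ is recoloured, in which case either some member now forms a fresh singleton class adjacent to $x$, or all members of $C$ were leaves, which forces $x$ to be a support vertex, and supports are handled explicitly by your construction (including a support chosen as some $u_{s'}$, whose new class is a fresh singleton still totally dominated by its leaves). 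With the shortcut removed and this case analysis added, your argument is correct and matches the paper's in outline, differing only in the fresh-colour accounting.
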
\vspace*{-.5 cm}
\begin{proof} 
Let $T$ be a tree that does not belong to the family~$\cT$. By Theorem~\ref{t:tree-1}, $\gamma_t(T) \ne \tdc(T)$, implying by Corollary~\ref{planarbound} that either $\tdc(T) = \gamma_t(T) + 1$ or $\tdc(T) = \gamma_t(T) + 2$. By Theorem~\ref{t:tree-2}, there exists a $\tdc$-coloring $\cal{H}$ of $T$ which contains a free color class, say $R$ where vertices in $R$ are colored using color~$r$. 

We now construct a $\tdc$-coloring of $T$ as follows. Since $T \notin \cT$, we note that the tree $T$ is not a star, implying that each support vertex of $T$ has some non-leaf neighbour. For each support vertex $u$ in $T$, do the following. If all the leaf neighbours of $u$ are colored using color~$r$, then we make no change to the colors of these leaf neighbours, and they all remain colored using color~$r$. Suppose, however, that some leaf neighbour of $u$ is not colored using color~$r$. The vertex~$u$ totally dominates some color class, say $S$, where vertices of $S$ are colored using color~$s$. 

Now, if a non-leaf neighbour of $u$ is colored with the color~$s$, then recolor all leaf neighbours of $u$ using color~$r$. Otherwise, if no non-leaf neighbour of $u$ is colored with the color~$s$, then some, but not all the leaf neighbours of $u$ are colored with the color~$s$. In this case, we select an arbitrary non-leaf neighbour of $u$ and recolor it with the color~$s$ and recolor all the leaf-neighbours of $u$ with the color~$r$. We do this for every support vertex in $T$. The resulting $\tdc$-coloring of $T$ colors all the leaves with the same color.
\end{proof}

\begin{co}
\label{CG}
If $T$ is a tree satisfying $\tdc(T) = \gamma_t(T)+1$, then there exists a $\tdc$-coloring $\cal{H}$ of $T$ that colors all the leaves with the same color and such that $\vert C_{G}^{\cal{H}}\vert = 1$.
\end{co}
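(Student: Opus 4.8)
The plan is to combine Proposition~\ref{P3} with the structural information in Theorem~\ref{t:tree-2}(a), and then to argue that the free color class guaranteed there can be taken to be the unique member of $C_G^{\cal{H}}$. First, since $\tdc(T) = \gamma_t(T)+1$, Theorem~\ref{t:tree-1} gives $T \notin \cT$, so Proposition~\ref{P3} applies and yields a $\tdc$-coloring of $T$ that colors all leaves with one common color; call that color $r$ and its color class $L$. Simultaneously, Theorem~\ref{t:tree-2}(a) tells us that \emph{some} $\tdc$-coloring of $T$ has a free color class. The first task is to reconcile these two facts: I would re-run the recoloring argument of Proposition~\ref{P3} starting from a $\tdc$-coloring $\cal{H}_0$ that already has a free color class $R$ (as provided by Theorem~\ref{t:tree-2}(a)), and check that the recoloring steps used there — which only move leaf neighbours onto the color $r$ of $R$ and occasionally push one non-leaf neighbour onto the color that a support vertex totally dominates — preserve the property that $R$ remains a free color class. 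This is plausible because in the construction of Proposition~\ref{P3} the distinguished color is exactly $r$, so $R$ is never depleted; one just has to confirm no vertex loses its totally-dominated class. The output of this step is a $\tdc$-coloring $\cal{H}$ in which all leaves get color $r$ and $R$ (the class of color $r$) is a free color class.

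Next I would analyze $C_G^{\cal{H}}$ for this coloring. A class in $C_G^{\cal{H}}$ is one with at least two vertices that is totally dominated by no vertex of $G$. I claim $L = V_r^{\cal{H}}$, the common leaf class, lies in $C_G^{\cal{H}}$: it has at least two vertices provided $T$ has at least two leaves (true for any tree on $\ge 2$ vertices, and $T \notin \cT$ so $T \ne P_2$ forces $n \ge 3$ with $\ge 2$ leaves), and it is totally dominated by no vertex, since a vertex totally dominating $L$ would have to be adjacent to every leaf of $T$, impossible in a tree with $\ge 3$ vertices unless $T$ is a star — but stars lie in $\cT$. Hence $|C_G^{\cal{H}}| \ge 1$. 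For the reverse inequality I would use a counting argument on the number of colors: writing $|C^{\cal{H}}| = \tdc(T) = \gamma_t(T)+1$ and partitioning $C^{\cal{H}} = C_P^{\cal{H}} \cup C_S^{\cal{H}} \cup C_G^{\cal{H}}$, one observes that $C_0^{\cal{H}}$ is disjoint from $C_G^{\cal{H}}$ (classes in $C_G^{\cal{H}}$ are totally dominated by nobody, hence cannot serve any vertex), so $|C_0^{\cal{H}}| \le |C_P^{\cal{H}}| + |C_S^{\cal{H}}| = \tdc(T) - |C_G^{\cal{H}}|$. Since $D_0^{\cal{H}}$ is a TD-set, $|C_0^{\cal{H}}| \ge \gamma_t(T)$, giving $|C_G^{\cal{H}}| \le \tdc(T) - \gamma_t(T) = 1$. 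Combining the two directions yields $|C_G^{\cal{H}}| = 1$, which completes the proof.

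The main obstacle I anticipate is the first step: verifying carefully that a $\tdc$-coloring can be found that \emph{simultaneously} colors all leaves identically and retains a free color class. Proposition~\ref{P3} and Theorem~\ref{t:tree-2}(a) are proved separately, and one must make sure the recoloring operations of the former do not destroy the freeness guaranteed by the latter — in particular, that no support vertex or interior vertex is left without a color class that it totally dominates after leaves are merged onto color $r$ and a lone non-leaf neighbour is occasionally shifted. One clean way to sidestep subtleties is to observe that $L$ itself is the natural candidate for the free class: after all leaves carry color $r$, every support vertex totally dominates $L$, and one then only needs every non-support vertex to totally dominate \emph{some} class, which follows from $\cal{H}$ being a TD-coloring to begin with — so $L$ is free essentially for free, and the counting argument of the previous paragraph then finishes things without ever invoking Theorem~\ref{t:tree-2}(a) at all. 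I would present the argument in that streamlined form if the details check out.
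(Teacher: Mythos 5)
Your streamlined final argument is correct and is essentially the paper's own proof: invoke Proposition~\ref{P3}, note that since $T \notin \cT$ the tree is not a star so no vertex is adjacent to all leaves and hence the common leaf class lies in $C_G^{\cal{H}}$, and then count — one vertex from each totally dominated color class gives a TD-set, so $\gamma_t(T) \le \tdc(T) - \vert C_G^{\cal{H}}\vert$, forcing $\vert C_G^{\cal{H}}\vert = 1$ (the paper phrases this count as a contradiction with $\vert C_G^{\cal{H}}\vert \ge 2$ rather than through $C_0^{\cal{H}}$ and $D_0^{\cal{H}}$, but it is the same argument). Two side remarks: your opening worry about preserving the free class of Theorem~\ref{t:tree-2}(a) through the recoloring is indeed unnecessary, exactly as you concluded, and the claim in your last paragraph that every support vertex totally dominates $L$ is false (a support vertex is not adjacent to leaves of other support vertices), but this does not harm the proof since freeness of $L$ is never needed — only $L \in C_G^{\cal{H}}$, which you establish correctly.
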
\vspace*{-.4 cm}
\begin{proof}
Let $T$ be a tree satisfying $\tdc(T) = \gamma_t(T)+1$. By Proposition~\ref{P3}, there exists a $\tdc$-coloring $\cal{H}$ of $T$ that colors all leaves with the same color. Assume that color class $R$ contains all the leaves of $T$. Since $T \notin \cT$, the tree $T$ is not a star. Thus, there does not exist any vertex in $T$ which is adjacent to all the leaves of $T$, implying that $R$ is a free color class of $\cal{H}$ and hence the color class $R$ belongs to the set $C_{G}^{\cal{H}}$. We show that $C_{G}^{\cal{H}} = \{R\}$. On the contrary, suppose that $\vert C_{G}^{\cal{H}}\vert \ge 2$. In this case, the set consisting of one vertex from each $Q \in C_{P}^{\cal{H}} \cup C_{S}^{\cal{H}}$ forms a TD-set of $T$, which is of cardinality $\vert C_{P}^{\cal{H}}\vert + \vert C_{S}^{\cal{H}}\vert \le \vert{\cal{H}}\vert - \vert C_{G}^{\cal{H}}\vert \le \tdc(T) - 2 = \gamma_t(T) - 1$, a contradiction. Therefore, $C_{G}^{\cal{H}} = \{R\}$, and so $\vert C_{G}^{\cal{H}}\vert = 1$.
\end{proof}

We next prove some key lemmas that we will need to prove our characterization of trees $T$ satisfying $\gamma_t(T) = \tdc(T) + 1$.
\begin{lem}
\label{t1}
If $\cal{H}$ is a $\tdc$-coloring in a tree $T$, then the following properties hold. 
\begin{enumerate}
\item[{\rm (a)}] Every $R \in C_S^{\cal{H}}$ is totally dominated by exactly one vertex.
\item[{\rm (b)}] $A^{\cal{H}} \cup D_S^{\cal{H}}$ is a TD-set of $T$.
\end{enumerate}
\end{lem}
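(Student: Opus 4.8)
The plan is to prove both parts directly from the definitions: part~(a) uses only that $T$ is acyclic, while part~(b) uses the defining property of a TD-coloring together with the construction of $D_S^{\cal H}$. Neither part requires any of the earlier results of this section.

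For part~(a) I would argue by contradiction: suppose some $R \in C_S^{\cal H}$ is totally dominated by two distinct vertices $u$ and $v$. The first step is to observe that neither $u$ nor $v$ lies in $R$. Indeed, $\cal H$ is a proper coloring, so $R$ is an independent set, and by definition of $C_S^{\cal H}$ we have $|R| \ge 2$; a vertex of $R$ is not adjacent to any other vertex of $R$, so it cannot totally dominate $R$. Now choose two distinct vertices $x, y \in R$. Then $u, x, v, y$ are four pairwise distinct vertices, and since each of $u$ and $v$ is adjacent to every vertex of $R$ we get $ux, xv, vy, yu \in E(T)$; thus $u\,x\,v\,y\,u$ is a cycle of length~$4$ in $T$, contradicting that $T$ is a tree. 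Hence $R$ is totally dominated by at most one vertex, and since $R \in C_S^{\cal H}$ it is totally dominated by at least one vertex, so by exactly one.

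For part~(b) I would take an arbitrary vertex $w$ of $T$ and show it has a neighbour in $A^{\cal H} \cup D_S^{\cal H}$. Since $\cal H$ is a TD-coloring, $w$ totally dominates some color class $R$, that is, $w$ is adjacent to every vertex of $R$. If $|R| = 1$, say $R = \{z\}$, then $R \in C_P^{\cal H}$, so $z \in A^{\cal H}$ and $wz \in E(T)$. If $|R| \ge 2$, then $R$ has more than one vertex and is totally dominated (by $w$), so it is neither in $C_P^{\cal H}$ nor in $C_G^{\cal H}$, forcing $R \in C_S^{\cal H}$; letting $z$ be the vertex of $R$ selected into $D_S^{\cal H}$, we have $z \in R$ and hence $wz \in E(T)$. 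In either case $w$ has a neighbour in $A^{\cal H} \cup D_S^{\cal H}$, and since $w$ was arbitrary, $A^{\cal H} \cup D_S^{\cal H}$ is a TD-set of $T$.

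I do not expect a genuine obstacle here; the only point that needs care is ensuring that the four vertices in the $C_4$-argument of part~(a) are pairwise distinct, which is exactly why one first rules out $u, v \in R$ and uses $|R| \ge 2$ to guarantee $x \neq y$. (Note also that part~(b) does not actually rely on part~(a), since it only needs one representative of each class of $C_S^{\cal H}$ in $D_S^{\cal H}$.)
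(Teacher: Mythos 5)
Your proposal is correct and follows essentially the same route as the paper: the acyclicity/$C_4$ argument for part~(a) and, for part~(b), picking for each vertex the totally dominated color class and noting it lies in $C_P^{\cal H}$ or $C_S^{\cal H}$, hence supplies a neighbour in $A^{\cal H} \cup D_S^{\cal H}$. Your only addition is the explicit check that the two dominating vertices lie outside $R$ so the four vertices of the $4$-cycle are distinct, a detail the paper leaves implicit.
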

\begin{proof}
Let $\cal{H}$ be a $\tdc$-coloring of $T$. Let $R \in C_S^{\cal{H}}$. If $R$ is totally dominated by two or more vertices, then any two such vertices, together with any two vertices from $R$, induce a subgraph of the tree $T$ that contains a $4$-cycle, which is a contradiction. Hence, $R$ is totally dominated by exactly one vertex. This proves part~(a).

To prove part~(b), 
we assume that $v$ is an arbitrary vertex of $T$. As $\cal{H}$ is a $\tdc$-coloring of $T$, there exists a color class, say $R$, such that $v$ totally dominates $R$. Thus, $R \in C_P^{\cal{H}} \cup C_S^{\cal{H}}$. If $R \in C_P^{\cal{H}}$, then $v$ is totally dominated by some vertex of $A^{\cal{H}}$. Otherwise, if $R \in C_S^{\cal{H}}$, then $v$ is totally dominated by some vertex of $D_S^{\cal{H}}$. Therefore, $A^{\cal{H}} \cup D_S^{\cal{H}}$ is a TD-set of $T$. This proves part~(b).
\end{proof}

\begin{lem}
\label{t3}
If $T$ is a tree satisfying $\tdc(T) = \gamma_t(T)+1$, then there exists a $\tdc$-coloring $\cal{H}$ of $T$ such that $A^{\cal{H}} \cup D_S^{\cal{H}}$ is a $\gamma_t$-set of $T$.
\end{lem}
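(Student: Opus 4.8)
The plan is to show that the $\tdc$-coloring supplied by Corollary~\ref{CG} already has the desired property, by a short cardinality count. So first I would invoke Corollary~\ref{CG} to fix a $\tdc$-coloring $\cal{H}$ of $T$ (the one that colors all leaves with a common color) satisfying $\vert C_G^{\cal{H}}\vert = 1$. The key point here is that Corollary~\ref{CG} gives $\vert C_G^{\cal{H}}\vert$ equal to $1$, not merely at least $1$; this is what makes the subsequent count tight.

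Next I would assemble the elementary facts about this coloring. Since $C_P^{\cal{H}}$, $C_S^{\cal{H}}$, $C_G^{\cal{H}}$ partition the collection $C^{\cal{H}}$ of all $\tdc(T)$ color classes, and $\vert C_G^{\cal{H}}\vert = 1$ while $\tdc(T) = \gamma_t(T)+1$, we get $\vert C_P^{\cal{H}}\vert + \vert C_S^{\cal{H}}\vert = \gamma_t(T)$. Each color class in $C_P^{\cal{H}}$ is solitary, i.e.\ a singleton, so $\vert A^{\cal{H}}\vert = \vert C_P^{\cal{H}}\vert$; and by construction of $D_S^{\cal{H}}$ we have $\vert D_S^{\cal{H}}\vert = \vert C_S^{\cal{H}}\vert$. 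Moreover $A^{\cal{H}}$ consists of vertices lying in color classes of $C_P^{\cal{H}}$ while $D_S^{\cal{H}}$ consists of vertices lying in color classes of $C_S^{\cal{H}}$; since distinct color classes are disjoint, $A^{\cal{H}} \cap D_S^{\cal{H}} = \emptyset$, and therefore $\vert A^{\cal{H}} \cup D_S^{\cal{H}}\vert = \vert C_P^{\cal{H}}\vert + \vert C_S^{\cal{H}}\vert = \gamma_t(T)$.

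Finally, Lemma~\ref{t1}(b) tells us that $A^{\cal{H}} \cup D_S^{\cal{H}}$ is a TD-set of $T$; a TD-set of cardinality $\gamma_t(T)$ is by definition a $\gamma_t$-set of $T$, so $\cal{H}$ is the required coloring. There is no serious obstacle: all the real work has been front-loaded into Corollary~\ref{CG} and Lemma~\ref{t1}(b), and the only subtleties to state carefully are that solitary classes are singletons (so $\vert A^{\cal{H}}\vert = \vert C_P^{\cal{H}}\vert$) and that $A^{\cal{H}}$ and $D_S^{\cal{H}}$ are disjoint.
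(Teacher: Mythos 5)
Your proposal is correct and follows essentially the same route as the paper: fix the coloring from Corollary~\ref{CG} with $\vert C_G^{\cal{H}}\vert = 1$, count $\vert A^{\cal{H}}\vert + \vert D_S^{\cal{H}}\vert = \vert C_P^{\cal{H}}\vert + \vert C_S^{\cal{H}}\vert = \tdc(T) - 1 = \gamma_t(T)$, and conclude via Lemma~\ref{t1}(b) that this TD-set is a $\gamma_t$-set. Your explicit remark that $A^{\cal{H}}$ and $D_S^{\cal{H}}$ are disjoint is a small extra care the paper leaves implicit, but the argument is the same.
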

\begin{proof}
Let $T$ be a tree and $\tdc(T) = \gamma_t(T)+1$. By Corollary~\ref{CG}, there exists a $\tdc$-coloring $ \cal{H} $ of $T$ satisfying $ \vert C_{G}^{\cal{H}} \vert = 1$. We note that $\tdc(T) = \vert C_P^{\cal{H}} \vert + \vert C_S^{\cal{H}} \vert + \vert C_G^{\cal{H}} \vert $.  Moreover, by definition we have $\vert C_P^{\cal{H}} \vert = \vert A^{\cal{H}} \vert $ and $\vert C_S^{\cal{H}} \vert = \vert D_S^{\cal{H}} \vert $. Thus,
\[
\begin{array}{lcl}
\gamma_t(T) & = & \tdc(T) - 1 \1 \\
& = & ( \vert C_P^{\cal{H}} \vert + \vert C_S^{\cal{H}} \vert + \vert C_G^{\cal{H}} \vert ) - 1 \1 \\
& = &  (1 + \vert A^{\cal{H}}\vert + \vert D_S^{\cal{H}} \vert ) - 1 \1 \\
& = &  \vert A^{\cal{H}} \vert + \vert D_S^{\cal{H}} \vert .
\end{array}
\]

By Lemma~\ref{t1}(b), we infer that the TD-set $A^{\cal{H}} \cup D_S^{\cal{H}}$ of $T$ is therefore a minimum TD-set, that is, $A^{\cal{H}} \cup D_S^{\cal{H}}$ is a $\gamma_t$-set of $T$.  
\end{proof}

Before presenting our main result of this section, we introduce some additional notation. Let $T$ be a non-trivial tree satisfying $T \notin \cT$, and $D$ be a $\gamma_t$-set of the tree $T$ and $S \subseteq D$, then $v \in V(T)$ is called a $(D,S)$-\emph{bad vertex} if $\vert N_T(v) \cap D \vert \ge 2$ and $N_T(v)\cap D \subseteq S$. We are now in a position to provide a characterization of trees $T$ satisfying $\gamma_t(T) = \tdc(T) + 1$.

\begin{thm}
\label{tree:char}
If $T$ is a non-trivial tree and $T \notin \cT$, then $\tdc(T) = \gamma_t(T)+1$ if and only if there exists a $\gamma_t$-set $D$ of $T$ and a partition $(D_{1},D_{2})$ of $D$ satisfying the following properties: 
\begin{enumerate}
\item[{\rm (a)}] $D_2 \subseteq D_1$,
\item[{\rm (b)}] $T$ contains no $(D_2,D)$-bad vertex, and
\item[{\rm (c)}] the set $V(T)\setminus (D_1 \cup N[S])$ is independent, where
$\displaystyle{S= \bigcup_{v \in D_2} \pn(v,D)}$.
\end{enumerate}
\end{thm}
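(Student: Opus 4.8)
The plan is to prove the two directions separately, using the structural results already established (Propositions~\ref{P1}--\ref{P3}, Corollary~\ref{CG}, and Lemmas~\ref{t1},~\ref{t3}) as the main machinery.

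For the forward direction, I would start from a tree $T \notin \cT$ with $\tdc(T) = \gamma_t(T)+1$ and apply Lemma~\ref{t3} to obtain a $\tdc$-coloring $\cal{H}$ for which $D := A^{\cal{H}} \cup D_S^{\cal{H}}$ is a $\gamma_t$-set of $T$, while by Corollary~\ref{CG} we may also assume $\vert C_G^{\cal{H}}\vert = 1$ and that all leaves of $T$ lie in the single class of $C_G^{\cal{H}}$. The natural candidate partition is $D_1 := A^{\cal{H}}$ (the solitary vertices) and $D_2 := D_S^{\cal{H}}$; but since we need $D_2 \subseteq D_1$, I expect one actually sets $D_2$ to be the image $D_S^{\cal{H}}$ \emph{viewed appropriately} — more precisely, the intended reading of (a) is presumably $D_2 \subseteq D_1$ as required, so I would take $D_1 = A^{\cal{H}} \cup D_S^{\cal{H}} = D$ and $D_2 = D_S^{\cal{H}}$, making (a) automatic. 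Property (b), no $(D_2,D)$-bad vertex, should follow from Lemma~\ref{t1}(a): each $R \in C_S^{\cal{H}}$ is totally dominated by exactly one vertex of $T$, so a vertex with two neighbours in $D_2 = D_S^{\cal{H}}$ would be totally dominated by two vertices representing two classes of $C_S^{\cal{H}}$ — and I would argue the $4$-cycle obstruction (as in the proof of Lemma~\ref{t1}(a)), or a recoloring argument, rules this out. Property (c): the vertices outside $D_1 \cup N[S]$ should be exactly (a subset of) the vertices forced to lie in the free color class $R \in C_G^{\cal{H}}$; since $R$ is a color class of a proper coloring it is independent, giving independence of $V(T)\setminus(D_1\cup N[S])$. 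Here $S = \bigcup_{v\in D_2}\pn(v,D)$ is the set of private neighbours witnessing why the $D_S^{\cal{H}}$ vertices are "$S$-type" rather than solitary, and $N[S]$ collects the vertices whose domination is already accounted for by those private neighbours.

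For the reverse direction, I would assume the existence of $D$, $D_1$, $D_2$, $S$ with (a)--(c) and construct an explicit $\tdc$-coloring of $T$ using $\gamma_t(T)+1 = \vert D\vert + 1$ colors; combined with Corollary~\ref{planarbound}(a) and Theorem~\ref{t:tree-1} (which gives $\tdc(T) \ge \gamma_t(T)+1$ since $T\notin\cT$), this forces equality. The coloring: give each vertex of $D_1 = D$ its own solitary colour, except collapse the privileged private-neighbour structure so that for each $v \in D_2$ the private neighbour(s) in $\pn(v,D)$ get reused appropriately; then give all remaining vertices — those in $V(T)\setminus(D_1\cup N[S])$ together with leftover vertices of $N[S]\setminus D_1$ — one extra colour $r$, using (c) to guarantee this last class is independent (after possibly merging with part of $S$), so the colouring is proper. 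One must then verify every vertex totally dominates a class: vertices dominated via $D_1$ totally dominate the relevant solitary class; vertices whose only $D$-neighbours lie in $D_2$ are handled because condition (b) guarantees no vertex is "trapped" with all its $D$-neighbours of the reused type, so it still sees a genuine solitary class; and the vertices of $D$ themselves totally dominate the big class $r$ or a solitary neighbour. Counting the colours gives $\vert D_1\vert + 1 = \gamma_t(T)+1$.

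\textbf{Main obstacle.} The delicate point is the bookkeeping in condition (c) and the precise definition of $S$: understanding exactly which vertices are \emph{forced} into the single free color class and showing this set is independent iff such a coloring exists. In the forward direction the subtlety is that solitary vertices of $\cal{H}$ that happen to be adjacent to the free class behave differently from those that are not, and reconciling $A^{\cal{H}}$, $D_S^{\cal{H}}$, and the private-neighbourhood sets $\pn(v,D)$ with the combinatorial data $(D_1,D_2,S)$ will require care — in particular checking that the "$(D_2,D)$-bad vertex" condition is \emph{exactly} the obstruction to pushing extra vertices into the free class without creating a monochromatic edge or destroying the total-domination property. I expect the proof to proceed by carefully tracking, for each vertex, whether its total-domination requirement is met by a solitary class it is adjacent to, and to use the tree (acyclic) structure repeatedly to prevent a vertex from needing two distinct multi-vertex classes.
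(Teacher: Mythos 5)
There is a genuine gap, and it starts with your reading of condition (a). The paper's (a) is a typo for $D_2 \subseteq D_I$, where $D_I = \{v \in D : \vert \pn(v,D)\vert = 1\}$ as defined in the preliminaries (the paper's own converse direction restates the hypothesis in exactly this form); it cannot literally mean $D_2 \subseteq D_1$ for a partition $(D_1,D_2)$. Your fix --- taking $D_1 = D$, $D_2 = D_S^{\cal{H}}$ so that (a) becomes ``automatic'' --- both destroys the partition structure and, more importantly, discards the actual content of (a). In the paper's forward direction the main work is precisely to prove that every representative $x \in D_2^{\cal{H}}$ of a class of $C_S^{\cal{H}}$ has a \emph{unique} $D^{\cal{H}}$-private neighbour: one first shows any $y \in \pn(x,D^{\cal{H}})$ must totally dominate the very class containing $x$ (because the leaf class is free, any other choice would give $y$ a second neighbour in $D^{\cal{H}}$), and then two such private neighbours would create a $4$-cycle with two vertices of that class. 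You never prove this uniqueness, yet it is what (a) encodes and what the whole converse hinges on. Relatedly, your justification of (b) misfires: a vertex with two neighbours in $D_2$ is not by itself contradictory (a bad vertex must in addition have \emph{no} neighbour in $D_1$), and Lemma~\ref{t1}(a) is not the relevant fact. The actual argument is that a bad vertex could not totally dominate any colour class at all: not a solitary class (it has no $D_1$-neighbour, and it is nobody's private neighbour), and not a class $Q \in C_S^{\cal{H}}$, since $Q$'s second vertex together with the unique private neighbour of $Q$'s representative would again force a $4$-cycle.

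The converse half of your proposal is also under-specified at exactly the point where (a) and (b) are consumed. The paper's construction is concrete: give each vertex of $D$ its own colour; for each $x \in D_2$ with its \emph{unique} private neighbour $y \in \pn(x,D)$, recolour all of $N_T(y)$ with $x$'s colour; give all remaining vertices one new colour, whose class is independent by (c). Total domination then follows because (a) guarantees that any vertex whose only $D$-neighbour is some $x \in D_2$ must itself be that unique private neighbour $y$ (and hence totally dominates $x$'s class $\subseteq N_T(y)$), while (b) rules out vertices all of whose ($\ge 2$) $D$-neighbours lie in $D_2$, and every other vertex sees a solitary class of a $D_1$-vertex. Your sketch (``the private neighbour(s) in $\pn(v,D)$ get reused appropriately'', ``condition (b) guarantees no vertex is trapped'') never states this recolouring rule, and without the uniqueness from (a) it is not even well-defined; consequently the verification you outline cannot be completed as written. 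The overall architecture (use Corollary~\ref{CG} and Lemma~\ref{t3} forward, build an explicit $(\vert D\vert+1)$-colouring backward) does match the paper, but the two pivotal ingredients --- the unique-private-neighbour property behind (a) and the ``bad vertex dominates nothing'' argument behind (b) --- are missing or incorrect in your proposal.
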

\begin{proof}
Let $T$ be a tree satisfying $\tdc(T) = \gamma_t(T)+1$. By Corollary~\ref{CG}, there exists a $\tdc$-coloring $\cal{H}$ of $T$ that colors all leaves with the same color and such that $\vert C_{G}^{\cal{H}}\vert = 1$. Let $R$ be the set of all leaves of $T$, and so $C_{G}^{\cal{H}} =\{ R\}$. Let $D_1^{\cal{H}} \subseteq V$ which contain all the solitary vertices from the color classes of $C_P^{\cal{H}}$, and let $D_2^{\cal{H}}\subseteq V$ contains precisely one vertex from each $Q \in C_S^{\cal{H}}$. Thus, $\vert D_1^{\cal{H}}\vert = \vert C_P^{\cal{H}}\vert =  \vert A^{\cal{H}}\vert$ and $\vert D_2^{\cal{H}}\vert = \vert C_S^{\cal{H}}\vert =  \vert D_S^{\cal{H}}\vert$. By Lemma~\ref{t3} and its proof, the set $D^{\cal{H}} = D_1^{\cal{H}} \cup D_2^{\cal{H}}$ is a $\gamma_t$-set of $T$. Thus, $\vert D^{\cal{H}}\vert = \gamma_t(T)$ and $\pn(x,D^{\cal{H}}) \ne \emptyset$, for each $x \in D^{\cal{H}}$.

Let $x \in D_2^{\cal{H}}$ and let $X$ be the color class of $\cal{H}$ such that $x\in X$. Thus, $x \in X$ and $\vert X \vert \ge 2$. We show that $x$ has a unique $D^{\cal{H}}$-private neighbour, that is, $\vert \pn(x,D^{\cal{H}})\vert = 1$. As observed earlier, $\pn(x,D^{\cal{H}}) \ne \emptyset$. Let $y \in \pn(x,D^{\cal{H}})$, and thus, the only neighbour of $y$ in $D^{\cal{H}}$ is~$x$. Since $\cal{H}$ is a TD-coloring of $T$, $y$ totally dominates some color class, say $Y \in  C_P^{\cal{H}} \cup  C_S^{\cal{H}}$. If $Y$ is different from $X$ and since $R$ is a free color class, then the vertex $y$ would be adjacent to at least two vertices in the $D^{\cal{H}}$, contradicting the supposition that $y \in \pn(x,D^{\cal{H}})$. Hence, $y$ totally dominates color class $X$ and $x\in X$.

To the contrary, suppose that $\vert \pn(x,D^{\cal{H}})\vert \ge 2$, and let $z$ be a vertex in $\pn(x,D^{\cal{H}})$ different from~$y$. Analogous arguments as given for the vertex~$y$ show that $z$ totally dominates color class $X$. However, $\vert X \vert \ge 2$. Thus, any two vertices from the color class $X$, together with the vertices $y$ and $z$, induce a subgraph of the tree $T$ that contains a $4$-cycle, a contradiction. Hence, $\pn(x,D^{\cal{H}}) = \{y\}$, that is, the vertex $x$ has a unique $D^{\cal{H}}$-private neighbour, where $x$ is an arbitrary vertex in $D_2^{\cal{H}}$. Now, recall that $D_I = \{v \in D \colon \vert \pn(v,D)\vert =1 \}$, and so $D_2^{\cal{H}} \subseteq D_I$. Let
\[
S= \bigcup_{x \in D_2^{\cal{H}}} \pn(x,D^{\cal{H}}).
\]

By our earlier observations, $\vert \pn(x,D^{\cal{H}}) \vert = 1$ and the vertex in $\pn(x,D^{\cal{H}})$ totally dominates the color class containing~$x$, for every vertex $x \in D_2^{\cal{H}}$. Thus, $N_T[S]$ contains all  vertices that belong to the set $C_S^{\cal{H}}$. This implies that $V(T) \setminus (D_1^{\cal{H}} \cup N[S]) \subseteq B^{\cal{H}}$, where $B^{\cal{H}}=R$. Now, since $R$ is independent, the set $V(T) \setminus (D_1^{\cal{H}} \cup N[S])$ is also independent.

Next, we show that there is no $(D_2^{\cal{H}},D^{\cal{H}})$-bad vertex. To the contrary, suppose that there exists a $(D_2^{\cal{H}},D^{\cal{H}})$-bad vertex, say $v \in V(T)$. Thus, $\vert N_T(v) \cap D^{\cal{H}}\vert \ge 2$ and $N_T(v) \cap D^{\cal{H}} \subseteq D_2^{\cal{H}}$. Thus, $v$ cannot totally dominate any $K \in C_P^{\cal{H}}$. Further, $v$ is not a $D^{\cal{H}}$-private neighbour of any $w \in D^{\cal{H}}$. 
Let $v$ totally dominates $Q \in C_P^{\cal{H}} \cup  C_S^{\cal{H}}$. Necessarily, $Q$ belongs to $C_S^{\cal{H}}$. Let $u \in Q \cap D^{\cal{H}}$, and let $u'\in Q$ such that $u' \ne u$. By our earlier observations, $\vert\pn(u,D^{\cal{H}})\vert = 1$. Let $x \in \pn(u,D^{\cal{H}})$. Now, the set $\{u,u',v,x\}$ induces a $4$-cycle in $T$, a contradiction. 
As $Q$ was arbitrary and $v$ does not totally dominate any $Q \in C_S^{\cal{H}}$, a contradiction to $\cal{H}$ being a $TD$-coloring of $T$. Hence, there is no $(D_2^{\cal{H}},D^{\cal{H}})$-bad vertex. Thus the properties (a),~(b) and~(c) all hold, where $D_1 = D_1^{\cal{H}}$ and $D_2 = D_2^{\cal{H}}$.

Conversely, let $T$ be a non-trivial tree and $T \notin \cT$, and let there exists a $\gamma_t$-set $D$ of $T$ and a partition $(D_{1},D_{2})$ of $D$ satisfying the three properties (a),~(b) and~(c), that is, (a) $D_2 \subseteq D_I$, (b) $T$ contains no $(D_2,D)$-bad vertex, and (c) the set $V(T)\setminus (D_1 \cup N[S])$ is independent, where
\[
S = \bigcup_{v \in D_2} \pn(v,D).
\]

Let ${\cal C}$ be the coloring of the vertices of $T$ defined as follows. Color each vertex in $D$ with a unique color. Further, for each vertex $x \in D_2$ and its unique $D$-private neighbour $y \in \pn(x,D)$, we color all the vertices in $N_T(y)$ with the same color used to color~$x$. Finally, we color all the remaining uncolored vertices with one new color. Since $V(T)\setminus (D_1 \cup N[S])$ is independent and since $T$ contains no $(D_2,D)$-bad vertex, we infer that ${\cal C}$ is a TD-coloring of $T$, which implies that $\tdc(T) \le \vert {\cal C} \vert = \vert D \vert + 1 = \gamma_t(T) + 1$. However, $\tdc(T) \ge \gamma_t(T) + 1$, since by supposition $T \notin \cT$. Therefore, $\tdc(T) = \gamma_t(T) + 1$. 
\end{proof}

By Theorem~\ref{t:tree-1} and Theorem~\ref{tree:char}, we have a characterization of trees for all three possible values of the total dominator chromatic number.

\section{Total dominator chromatic number of cographs}
\noindent
In this section, we compute the total dominator chromatic number of connected and disconnected cographs in terms of the chromatic number of cographs. 
In a $TD$-coloring $\cal{H}$ of $G$, we call a color class $R \in C_0^{\cal{H}}$ as an \emph{exclusive color class} and the corresponding color an \emph{exclusive color}. The remaining colors in the coloring $\cal{H}$ we call \emph{non-exclusive colors}. The number of exclusive colors will be unique for a given TD-coloring $\cal{H}$ of $G$, but if we change the TD-coloring, then this may change accordingly.

First, we show that the total dominator chromatic number and the chromatic number coincides for connected cographs. Further, we prove that in any optimal TD-coloring of a connected cograph, there are at least two exclusive color classes.
\begin{thm}
\label{conco}
If $G$ is a connected cograph, then $\tdc(G) = \chi(G)$. Further, if $\cal{H}$ is a $\tdc$-coloring $\cal{H}$ of $G$, then $\vert C_0^{\cal{H}}\vert =2$.
\end{thm}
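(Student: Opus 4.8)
The plan is to exploit the join structure of connected cographs; the first statement then reduces, via Theorem~\ref{t:thmKa5}, to proving $\tdc(G) \le \chi(G)$, and the same structural fact drives the second statement. The key observation I would set up first is that since $G$ is a connected cograph of order $n \ge 2$, its complement $\barG$ is disconnected, so $V(G)$ partitions into nonempty sets $V_1,V_2$ with every vertex of $V_1$ adjacent to every vertex of $V_2$; that is, $G = G[V_1] + G[V_2]$ is a join of the two (necessarily) cographs $G_1 = G[V_1]$ and $G_2 = G[V_2]$. The structural consequence I will use repeatedly is that every independent set of $G$ -- in particular every color class of every proper coloring of $G$ -- lies entirely inside $V_1$ or entirely inside $V_2$, since a vertex of $V_1$ and a vertex of $V_2$ are always adjacent.

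For $\tdc(G)=\chi(G)$: I would take any proper coloring of $G$ with $\chi(G)$ colors. By the observation above its color classes split into those contained in $V_1$ and those contained in $V_2$, and since both $V_1$ and $V_2$ are nonempty, each side contributes at least one color class. Every vertex $v \in V_1$ is adjacent to all of $V_2$, hence it totally dominates every color class that lies in $V_2$ (and there is at least one such class, none of which contains $v$); symmetrically for $v \in V_2$. Thus this optimal proper coloring is already a TD-coloring, so $\tdc(G) \le \chi(G)$, and combined with $\tdc(G) \ge \chi(G)$ from Theorem~\ref{t:thmKa5} we get equality.

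For $\vert C_0^{\cal{H}}\vert = 2$, let $\cal{H}$ be a $\tdc$-coloring of $G$. For the lower bound: $C_0^{\cal{H}}$ is nonempty (as $\cal{H}$ is a TD-coloring and $V(G)\neq\emptyset$), and if $C_0^{\cal{H}} = \{R\}$ then every vertex of $G$ totally dominates $R$; choosing $v \in R$, the vertex $v$ would be adjacent to every vertex of $R$, including itself, a contradiction, so $\vert C_0^{\cal{H}}\vert \ge 2$. For the upper bound: by the structural observation each color class of $\cal{H}$ lies in $V_1$ or in $V_2$, and there is a color class $R_1 \subseteq V_1$ and a color class $R_2 \subseteq V_2$; every vertex of $V_1$ totally dominates $R_2$ and every vertex of $V_2$ totally dominates $R_1$, so $\{R_1,R_2\}$ witnesses $\vert C_0^{\cal{H}}\vert \le 2$. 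Hence $\vert C_0^{\cal{H}}\vert = 2$.

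I do not expect a real obstacle: the least routine step is the appeal to the cograph characterization (connected cograph of order $\ge 2$ $\iff$ disconnected complement, recalled in the Preliminaries, cf.~\cite{cograph}) to obtain the join decomposition, after which the argument that color classes cannot straddle $V_1$ and $V_2$ and the resulting total-domination claims are all immediate adjacency checks.
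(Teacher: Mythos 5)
Your proof is correct and follows essentially the same route as the paper: use the disconnected complement to write the connected cograph as a join, note that no color class can straddle the two sides, and conclude both that any $\chi(G)$-coloring is already a TD-coloring and that two classes (one on each side) suffice for total domination. Your explicit argument that $\vert C_0^{\cal{H}}\vert \ge 2$ (a vertex cannot totally dominate its own class) is a small point the paper leaves implicit, but otherwise the two proofs coincide.
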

\begin{proof}
Let $G$ be a connected cograph. Thus, the graph $\barG$ is not connected, implying that $V(G)$ can be partitioned into two non-empty disjoint subsets $P$ and $Q$ such that every $x \in P$ is adjacent to every $y \in Q$ in the graph $G$. 
Assume that $\cal{H}$ is a proper coloring of $G$. 
Clearly, for any color class $A \in C^{\cal{H}}$, either $A \subseteq P$ or $A \subseteq Q$ not both.
Let $a$ and $b$ be the colors used to color vertices in $P$ and $Q$, respectively. Assume that $V_a^{\cal{H}}$ and $V_b^{\cal{H}}$ be the color classes of color $a$ and $b$, respectively. 
We note that $V_a^{\cal{H}} \subseteq P$ and $V_b^{\cal{H}} \subseteq Q$. Thus, each $x \in P$ totally dominates $V_b^{\cal{H}}$, and each vertex of $Q$ totally dominates $V_a^{\cal{H}}$, implying that $\cal{H}$ is a TD-coloring of $G$. Therefore, $\chi(G) \le \tdc(G) \le \vert \cal{H}\vert  = \chi(G)$. Hence, $\tdc(G) = \chi(G)$. Moreover, since each $v \in V$ either totally dominates $V_a^{\cal{H}}$ or $V_b^{\cal{H}}$, the set $C_0^{\cal{H}} = \{V_a^{\cal{H}},V_b^{\cal{H}}\}$. Thus, $\vert C_0^{\cal{H}}\vert  = 2$ and there are two exclusive colors required in an optimal TD-coloring of $G$.  
\end{proof}

Next, we consider disconnected cographs $G$. Using the property that every component of $G$ is itself a connected cograph, we provide an expression for computing $\tdc(G)$ in terms of $\chi(G)$ and the number of components of $G$.
\begin{thm}\label{disconco}
If $G$ is a disconnected cograph with $k$ components, then
\[
\tdc(G)=\chi(G)+2(k-1).
\]
\end{thm}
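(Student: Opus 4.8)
The plan is to reduce the disconnected case to the connected case (Theorem~\ref{conco}) by analyzing how a $\tdc$-coloring distributes colors among the $k$ components $G_1,\dots,G_k$ of $G$. The key structural point is that colors can be shared between components freely, since vertices in different components are non-adjacent, but a vertex of $G_i$ can only totally dominate a color class that lies entirely within $G_i$; hence the total-domination requirement is a per-component condition while the proper-coloring requirement is global.

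First I would prove the upper bound $\tdc(G) \le \chi(G) + 2(k-1)$ by construction. For each component $G_i$, Theorem~\ref{conco} gives a $\tdc$-coloring using $\chi(G_i)$ colors, two of which are exclusive (every vertex of $G_i$ totally dominates one of these two classes). I would color $G_1$ with $\chi(G_1)$ fresh colors, and then for $i \ge 2$ color $G_i$ by reusing, among its $\chi(G_i)$ colors, as many of the previously used \emph{non-exclusive} colors of $G_1$ as possible, introducing fresh colors only for its two exclusive classes plus any overflow. Since $\chi(G) = \max_i \chi(G_i)$ for a disconnained graph, a careful bookkeeping (reuse the $\max$-sized palette globally for the non-exclusive colours, so only the two exclusive colours of each of $G_2,\dots,G_k$ are genuinely new) yields a valid TD-coloring with $\chi(G) + 2(k-1)$ colors: properness holds because same-colored vertices in different components are non-adjacent, and each vertex of $G_i$ still totally dominates one of $G_i$'s two exclusive classes.

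Next I would prove the matching lower bound $\tdc(G) \ge \chi(G) + 2(k-1)$. Let $\cal{H}$ be any $\tdc$-coloring of $G$. Restricting $\cal{H}$ to a component $G_i$ is a TD-coloring of $G_i$ (every vertex of $G_i$ totally dominates some color class, which must be contained in $G_i$), so $G_i$ uses at least $\tdc(G_i) = \chi(G_i)$ colors, and moreover at least two colors that appear \emph{only} on $G_i$: indeed the color class $V_j^{\cal H}$ totally dominated by a vertex $v \in G_i$ satisfies $V_j^{\cal H} \subseteq G_i$, and by the second statement of Theorem~\ref{conco} applied to $G_i$ there are two such exclusive classes for $G_i$, which therefore cannot be reused on any other component. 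These exclusive color sets for distinct components are pairwise disjoint, contributing $2k$ colors in total; the remaining colors number at least $\chi(G) - 2$ because even a single component $G_{i_0}$ with $\chi(G_{i_0}) = \chi(G)$ already forces $\chi(G) - 2$ non-exclusive colors on top of its $2$ exclusive ones. Hence $\tdc(G) \ge 2k + (\chi(G) - 2) = \chi(G) + 2(k-1)$.

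The main obstacle I anticipate is the lower-bound bookkeeping: one must argue cleanly that the two per-component exclusive colors are genuinely disjoint across all $k$ components \emph{and} simultaneously that $\chi(G) - 2$ further colors are unavoidable, without double-counting. The safe way is to fix a component $G_{i_0}$ attaining $\chi(G_{i_0}) = \chi(G)$, count its $\chi(G) - 2$ non-exclusive colors plus its $2$ exclusive colors, and then add the $2(k-1)$ exclusive colors of the remaining components, which are disjoint from everything already counted since they appear on no other component. A minor subtlety is confirming that Theorem~\ref{conco}'s ``$\vert C_0^{\cal H}\vert = 2$'' indeed forces exactly (at least) two distinct exclusive \emph{colors} per component in the restricted coloring — this follows because $C_0^{\cal H}$ restricted to $G_i$ is a valid choice of dominating color classes for $G_i$, so it has size at least the minimum, which is $2$.
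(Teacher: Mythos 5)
Your proposal is correct and takes essentially the same approach as the paper: restrict a TD-coloring to each component, use Theorem~\ref{conco} to get two exclusive colors per component (a totally dominated color class must lie inside one component, so these $2k$ colors are pairwise distinct), and share a pool of $\chi(G)-2$ non-exclusive colors across components. Your explicit split into an upper-bound construction and a lower-bound count is just a more detailed rendering of the paper's single counting argument on an optimal coloring.
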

\begin{proof} Let $G$ be a disconnected graph with $k \ge 2$ components $G_1, \ldots, G_k$, and let $\cal{H}$ be a $\tdc$-coloring of $G$. Let ${\cal{H}}_i$ be the restriction of the coloring $\cal{H}$ of $G$ to the component $G_i$, for $i \in [k]$. The resulting coloring ${\cal{H}}_i$ is itself a TD-coloring of $G_i$ for $i \in [k]$. Since each component of $G$ is itself a connected cograph, applying Theorem~\ref{conco} to each component of $G$, we infer that $\vert C_0^{{\cal{H}}_{i}} \vert =2$, and so each $G_i$ has two exclusive colors for all $i \in [k]$. Since there are $k$ such components, the $\tdc$-coloring $\cal{H}$ of $G$, therefore, has at least $2k$ exclusive colors.  Let $r_{\cal{H}}$ denote the maximum number of non-exclusive colors in the coloring $\cal{H}$. Thus, $\tdc(G) = \vert {\cal{H}} \vert  = 2k + r_{\cal{H}}$.

Applying Theorem~\ref{conco} to each component $G_i$ for $i \in [k]$, we have $\tdc(C_i) = \chi(C_i)$. If $\chi(C_i)< \chi(G)$ for each $i \in [k]$, then there exists a proper coloring of $G$ using less than $\chi(G)$ colors, a contradiction. Thus, there exists at least one component of $G$, say $C_j$, such that $\chi(C_j) = \chi(G)$. Therefore, the coloring ${\cal{H}}_j$ uses $\chi(G)$ colors. Among these $\chi(G)$ colors, two colors are exclusive for $C_j$ and the remaining $\chi(G)-2$ are non-exclusive colors for $C_j$, and so $r_{\cal{H}} \ge \chi(G) - 2$. However, $\chi(G) - 2$ is the maximum number of non-exclusive colors possible for any component of $G$, and so $r_{\cal{H}} \le \chi(G) - 2$. Consequently, $r_{\cal{H}} = \chi(G)-2$. Therefore, $\tdc(G)=2k+ (\chi(G)-2) =\chi(G)+2(k-1)$.  
\end{proof}

For a cograph $G$, the chromatic number $\chi(G)$ can be computed in linear-time \cite{cographs}. Thus, $\tdc(G)$ of cographs can also be computed in linear-time.

\section{Chain Graphs}
\noindent
In this section, we will show that $2 \le \tdc(G) \le 4$ for any chain graph $G$ and we characterize the chain graphs satisfying $\tdc(G) =i$ for each $i$, $2 \le i \le 4$. Below, we recall the definition and some properties of chain graphs.

A bipartite graph $G=(V,E)$ can be represented as $G = (X, Y,E)$ where $(X,Y)$ forms a partition of the vertex set $V$ of $G$ such that the sets $X$ and $Y$ are independent. Let $n_1 = \vert X\vert$ and $n_2 = \vert Y\vert$. Consider a bipartite graph $G = (X, Y,E)$, where $X=\{x_1,\ldots ,x_{n_1}\}$ and $Y=\{y_1,\ldots ,y_{n_2}\}$. ``A bipartite graph $G = (X, Y,E)$ is a \textit{chain graph} if there exists an ordering of $X$, say $(x_1,x_2, \ldots ,x_{n_1})$ such that $N(x_1) \subseteq N(x_2) \subseteq \cdots \subseteq N(x_{n_1})$''. Moreover, if $G=(X,Y,E)$ is a chain graph, then there also exists an ordering of $Y$ as well, say $(y_1,y_2, \ldots ,y_{n_2})$ such that $N(y_1) \supseteq N(y_2) \supseteq \cdots \supseteq N(y_{n_2})$. ``For a chain graph $G = (X, Y,E)$, such an ordering $(x_1,x_2, \ldots ,x_{n_1},y_1,y_2, \ldots ,y_{n_2})$ is a \textit{chain ordering} if $N(x_1) \subseteq N(x_2) \subseteq \cdots \subseteq N(x_{n_1})$ and $N(y_1) \supseteq N(y_2) \supseteq \cdots \supseteq N(y_{n_2})$". A chain ordering of given chain graph $G$ can be obtained in linear-time \cite{chain}.

Now, we define a relation $R$ on $X$ such that two vertices $x_i$ and $x_j$ in $X$ are related if $N(x_i)=N(x_j)$. Let $X_1, \ldots, X_k$ be the partition of $X$ based on the relation $R$. We now define $Y_1=N(X_1)$ and
\[
Y_i = N(X_i) \setminus \bigcup_{j=1}^{i-1} N(X_j)
\]
for $i \in [k] \setminus \{1\}$. The resulting sets $Y_1, \ldots, Y_k$ form a partition of $Y$. Such a partition $X_1, \ldots, X_k$ and $Y_1, \ldots, Y_k$ of $X$ and $Y$ is a \emph{chain partition} of $X$ and $Y$ of length~$k$. If $x \in X_i$ and $y \in Y_i$ for some $i \in [k]$, then the partition $X_1, \ldots, X_k$ of $X$ and $Y_1, \ldots, Y_k$ of $Y$ are defined so that
\[
N(x) = \bigcup_{j=1}^{i} Y_j \hspace*{0.5cm} \mbox{and} \hspace*{0.5cm} N(y) = \bigcup_{j=i}^{k} X_j.
\]

Throughout this section, we will consider an isolate-free chain graph $G$ with a chain partition $X_1, \ldots, X_k$ of $X$ and $Y_1, \ldots, Y_k$ of $Y$, respectively. Note that the number of sets in the partition of $X$ (or $Y$) is $k$.

In the following result, we establish the bounds on $\tdc(G)$ of a chain graph $G$ and we present some properties of an isolate-free chain graph.

\begin{lem}
\label{lem:chain}
If $G$ is chain graph with a chain partition of length~$k$, then the following properties hold. 
\begin{enumerate}
\item[{\rm (a)}] $\gamma_t(G)=2$.
\item[{\rm (b)}] $2 \le \tdc(G) \le 4$.
\item[{\rm (c)}] If $k \ge 2$, then $\tdc(G) \ge 3$.
\item[{\rm (d)}] If $k \ge 3$, then $\tdc(G) = 4$.
\end{enumerate}
\end{lem}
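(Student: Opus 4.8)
The plan is to treat the four parts in order, the first three being short and part~(d) carrying the real argument. I will use throughout that in the chain partition every $X_i$ and every $Y_i$ is nonempty, that $N(x)=Y_1\cup\cdots\cup Y_i$ when $x\in X_i$, and that $N(y)=X_i\cup\cdots\cup X_k$ when $y\in Y_i$; in particular a vertex of $X_k$ is adjacent to all of $Y$ and a vertex of $Y_1$ is adjacent to all of $X$. For~(a), I would take $x\in X_k$ and $y\in Y_1$: then $x$ totally dominates $Y$, $y$ totally dominates $X$, and $xy\in E$, so $\{x,y\}$ is a TD-set, whence $\gamma_t(G)=2$ (the lower bound is automatic as $G$ is isolate-free). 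This TD-set is also a dominating set, so $G$ is connected, which I record for later. For~(b), $G$ is bipartite with an edge, so $\chi(G)=2$, and substituting $\gamma_t(G)=2$ and $\chi(G)=2$ into Theorem~\ref{t:thmKa5} gives $2\le\tdc(G)\le 4$.

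For~(c), since $G$ is connected and bipartite its only proper $2$-coloring has color classes $X$ and $Y$. A vertex of $X$ has all neighbours in $Y$, so the only class it can totally dominate is $Y$, which forces it to be adjacent to all of $Y$, i.e.\ to lie in $X_k$; symmetrically every vertex of $Y$ must lie in $Y_1$. Hence the $2$-coloring is a TD-coloring only if $X=X_k$ and $Y=Y_1$, i.e.\ $k=1$, so $k\ge 2$ implies $\tdc(G)\ge 3$.

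For~(d), by~(b) and~(c) we have $\tdc(G)\in\{3,4\}$, so it suffices to rule out $\tdc(G)=3$. Suppose for contradiction there is a TD-coloring with classes $C_1,C_2,C_3$. Since $G$ is bipartite, any class totally dominated by a vertex of $X$ lies in $Y$ and any class totally dominated by a vertex of $Y$ lies in $X$; as $X,Y\ne\emptyset$ this yields a class inside $Y$ and a class inside $X$, say $C_1\subseteq Y$ and $C_2\subseteq X$. I would then split on where $C_3$ sits. If $C_3\subseteq Y$ then $C_2=X$ is the only class contained in $X$, so every vertex of $Y$ must be adjacent to all of $X$, contradicting that a vertex of $Y_k$ has neighbourhood $X_k\subsetneq X$ (using $k\ge 2$); the case $C_3\subseteq X$ is symmetric via a vertex of $X_1$ with neighbourhood $Y_1\subsetneq Y$. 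The remaining case is that $C_3$ meets both $X$ and $Y$; then no vertex can totally dominate $C_3$ (a neighbourhood lies wholly in one side), so every vertex totally dominates $C_1$ or $C_2$, forcing every vertex of $X$ to totally dominate $C_1$ and every vertex of $Y$ to totally dominate $C_2$. Testing a vertex of $X_1$ gives $C_1\subseteq Y_1$ and testing a vertex of $Y_k$ gives $C_2\subseteq X_k$, hence $X_{k-1}\subseteq X\setminus C_2\subseteq C_3$ and $Y_2\subseteq Y\setminus C_1\subseteq C_3$; but for $k\ge 3$ a vertex of $X_{k-1}$ is adjacent to all of $Y_2$ (since $2\le k-1$), giving an edge inside $C_3$ and contradicting independence. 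Thus $\tdc(G)\ne 3$, and with~(b) and~(c) we conclude $\tdc(G)=4$.

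The hard part will be the third case of part~(d): a color class of a bipartite graph need not respect the bipartition, so I must first argue that a ``straddling'' class is useless for total domination and then pin down an explicit monochromatic edge --- and it is precisely here that $k\ge 3$ (not merely $k\ge 2$) is used, through the adjacency between $X_{k-1}$ and $Y_2$. A minor but necessary point throughout is to justify that each part $X_i,Y_i$ of the chain partition is nonempty, which is what makes the inclusions $X_k\subsetneq X$ and $Y_1\subsetneq Y$ strict in all the contradictions above.
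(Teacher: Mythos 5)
Your proposal is correct, but parts (c) and (d) are organized differently from the paper's proof. The paper argues directly: taking $x \in X_1$ and $y \in Y_k$, it shows that the color class totally dominated by $x$ lies in $Y_1$ and the one dominated by $y$ lies in $X_k$, so colors~$1$ and~$2$ are unavailable on $(X \setminus X_k) \cup (Y \setminus Y_1)$; for $k \ge 2$ this already forces a third color on $x$, and for $k \ge 3$ the adjacent pair $x_2 \in X_2$, $y_2 \in Y_2$ forces two further colors, giving $\tdc(G) \ge 4$ at once. You instead rule out a $2$-coloring via uniqueness of the bipartition of a connected bipartite graph, and rule out a $3$-coloring by a case analysis on where the third class sits, including the ``straddling'' case that the paper never needs to confront, closing it with the monochromatic $X_{k-1}$--$Y_2$ edge. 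Both proofs rest on the same structural facts ($N(x)=Y_1$ for $x\in X_1$, $N(y)=X_k$ for $y\in Y_k$, and nonemptiness of every part of the chain partition); your route is self-contained and makes explicit why a class meeting both sides is useless for total domination, while the paper's forcing argument is shorter and avoids case distinctions. Two small points to tighten: the inference ``this TD-set is also a dominating set, so $G$ is connected'' is not valid as stated (dominating sets of size two do not force connectivity); what does work is that in a TD-set $\{x,y\}$ the two vertices must be adjacent (each totally dominates the other), and since $y\in Y_1$ dominates all of $X$ and $x\in X_k$ dominates all of $Y$, connectivity follows. Also do spell out, as you flag, that each $X_i$ is a nonempty equivalence class and each $Y_i$ is nonempty because the neighbourhoods $N(X_1)\subsetneq\cdots\subsetneq N(X_k)$ increase strictly and $G$ is isolate-free; this is what makes $X_k \subsetneq X$ and $Y_1 \subsetneq Y$ strict in your contradictions.
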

\begin{proof}
Let $G = (X, Y, E)$ be chain graph with a chain partition of length~$k$. Assume that the set $S=\{x,y\}$, where $x \in X_k$ and $y \in Y_1$. Then, $S$ is a TD-set of $G$, as a vertex $x' \in X$ totally dominates $y$ and $y' \in Y$ totally dominates $x$. So, $\gamma_t(G) \le 2$. Since $\gamma_t(F) \ge 2$ for all isolate-free graphs $F$, this yields $\gamma_t(G) = 2$. This proves part~(a).

By Corollary~\ref{planarbound}(a), we have $\gamma_t(G) \le \tdc(G) \le \gamma_t(G) + 2$. Since $\gamma_t(G) = 2$ by part~(a), this yields $2 \le \tdc(G) \le 4$, which proves part~(b).

To prove part~(c), let $k \ge 2$ and let $\cal{H}$ be a $\tdc$-coloring of $G$. Let $x \in X_1$. Let $V_1^{\cal{H}}$ be a color class totally dominated by $x$ and let the vertices in $V_1^{\cal{H}}$ be colored with color~$1$. Since $N(x) = Y_1$, the color class $V_1^{\cal{H}} \subseteq Y_1$, and so there exists a vertex in $Y_1$ with color~$1$. Let $y_1$ be such a vertex in $Y_1$ with color~$1$. We note that $y_1$ is adjacent to all vertices of $X$. Let $y \in Y_k$ and let $V_2^{\cal{H}}$ be a color class totally dominated by the vertex $y$ and let the vertices in $V_2^{\cal{H}}$ be colored with color~$2$. Since $N(y) = X_k$, the color class $V_2^{\cal{H}} \subseteq X_k$, and so there exists a vertex in $X_k$ with color~$2$. Let $x_k$ be such a vertex in $X_k$ with color~$2$. We note that $x_k$ is adjacent to whole of $Y$. Since $y$ totally dominates the color class $V_2^{\cal{H}} \subseteq X_k$, no vertex in $X_1$ is colored using color~$2$. In particular, the vertex $x \in X_1$ is not colored using color~$2$. Moreover, since the vertex $x$ is adjacent to a vertex in $Y_1$ of color~$1$, the vertex $x$ cannot be colored using color~$1$. Thus, a third color is needed to color~$x$, implying that $\tdc(G) = \vert {\cal{H}} \vert  \ge 3$. This completes the proof of part~(c).

To prove part~(d), let $k \ge 3$ and let $\cal{H}$ be a $\tdc$-coloring of $G$. We proceed exactly as in the proof of part~(c). Adopting our earlier notation in the proof of part~(c), the vertex $y_1 \in Y_1$ is colored with color~$1$ and the vertex $x_k \in X_k$ is colored with color~$2$. As observed earlier, the vertex $y_1$ is adjacent to every vertex of $X$, thus, no vertex in $X \setminus X_k$ is colored with color~$1$. Moreover, since the vertex $y \in Y_k$ totally dominates the color class $V_2^{\cal{H}} \subseteq X_k$, no vertex in $X \setminus X_k$ is colored with color~$2$. As observed earlier, the vertex $x_k$ is adjacent to every vertex of $Y$, and so no vertex in $Y \setminus Y_1$ is colored with color~$2$. Moreover, since the vertex $x \in X_1$ totally dominates the color class $V_1^{\cal{H}} \subseteq Y_1$, no vertex in $Y \setminus Y_1$ is colored with color~$1$. Hence, no vertex in $(X \setminus X_k) \cup (Y \setminus Y_1)$ is colored with color~$1$ or color~$2$. Let $x_2 \in X_2$ and let $y_2 \in Y_2$. Since $x_2$ and $y_2$ are adjacent vertices, two additional colors are therefore needed to color the vertices $x_2$ and $y_2$, and so $\tdc(G) \ge 4$. By part~(b), $\tdc(G) \le 4$. Consequently, in this case when $k \ge 3$, we have $\tdc(G) = 4$. This proves part~(d).
\end{proof}

We are now in a position to characterize the class of chain graphs for every possible value of $\tdc(G)$ in linear-time.

\begin{thm}
\label{chain:thm}
If $G$ is a chain graph with a chain partition of length~$k$, then the following properties hold. 
\begin{enumerate}
\item[{\rm (a)}] $\tdc(G) = 2$ if and only if $k=1$.
\item[{\rm (b)}] $\tdc(G) = 3$ if and only if $k=2$.
\item[{\rm (c)}] $\tdc(G) = 4$ if and only if $k \ge 3$.
\end{enumerate}
\end{thm}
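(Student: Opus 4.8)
The plan is to prove Theorem~\ref{chain:thm} by combining the bounds already established in Lemma~\ref{lem:chain} with a few short constructions of explicit TD-colorings. Since Lemma~\ref{lem:chain}(b) gives $2 \le \tdc(G) \le 4$ and Lemma~\ref{lem:chain}(c)--(d) already pin down the values when $k \ge 2$, the three equivalences in the theorem are in fact nearly immediate consequences, and the remaining work is to exhibit colorings achieving the small values when $k$ is small.

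\medskip
\noindent\textbf{Part (a).} For the forward direction, suppose $\tdc(G) = 2$. By Lemma~\ref{lem:chain}(c), if $k \ge 2$ then $\tdc(G) \ge 3$, a contradiction; hence $k = 1$. For the converse, suppose $k = 1$. Then $G$ is a complete bipartite graph $K_{n_1,n_2}$ with both parts $X = X_1$ and $Y = Y_1$ nonempty (as $G$ is isolate-free). Color all of $X$ with color~$1$ and all of $Y$ with color~$2$; this is proper since $X$ and $Y$ are independent, and each vertex of $X$ totally dominates the color class $Y$ while each vertex of $Y$ totally dominates $X$. Hence $\tdc(G) \le 2$, and since $\tdc(G) \ge 2$ always for isolate-free graphs, $\tdc(G) = 2$.

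\medskip
\noindent\textbf{Part (b).} For the forward direction, suppose $\tdc(G) = 3$. By Lemma~\ref{lem:chain}(d), if $k \ge 3$ then $\tdc(G) = 4$, so $k \le 2$; by part~(a) (or Lemma~\ref{lem:chain}(c)) $k \ne 1$ since $\tdc(G) \ne 2$; hence $k = 2$. For the converse, suppose $k = 2$, so we have a chain partition $X_1, X_2$ of $X$ and $Y_1, Y_2$ of $Y$. By Lemma~\ref{lem:chain}(c), $\tdc(G) \ge 3$, so it suffices to produce a TD-coloring with $3$ colors. Recall $N(y) = X_1 \cup X_2$ for $y \in Y_1$ and $N(y) = X_2$ for $y \in Y_2$, while $N(x) = Y_1$ for $x \in X_1$ and $N(x) = Y_1 \cup Y_2$ for $x \in X_2$. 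The plan is to color $Y_1 \cup X_1$ appropriately against $Y_2 \cup X_2$: assign color~$1$ to all of $X_2$, color~$2$ to all of $Y_1$, and color~$3$ to all vertices of $X_1 \cup Y_2$. This is a proper coloring since $X$ and $Y$ are independent and the only possible conflict, between $X_1$ and $Y_2$, does not arise as $N(x) = Y_1$ for $x \in X_1$ (so $X_1$ and $Y_2$ are nonadjacent). For the total-domination condition: every vertex of $X$ is adjacent to all of $Y_1$ (color~$2$), so it totally dominates the color class $Y_1$; every vertex of $Y_1$ is adjacent to all of $X$, hence in particular to all of $X_2$ (color~$1$), so it totally dominates $X_2$; and every vertex of $Y_2$ is adjacent to all of $X_2$ (color~$1$), so it totally dominates $X_2$. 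Since $X_2, Y_1 \ne \emptyset$ (chain partition of length~$2$ with $G$ isolate-free), all three color classes are nonempty, and we get $\tdc(G) \le 3$, hence $\tdc(G) = 3$.

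\medskip
\noindent\textbf{Part (c).} This is exactly Lemma~\ref{lem:chain}(d) together with the upper bound in Lemma~\ref{lem:chain}(b): if $k \ge 3$ then $\tdc(G) = 4$, and conversely if $\tdc(G) = 4$ then $k \ge 3$ since $k = 1$ forces $\tdc(G) = 2$ by part~(a) and $k = 2$ forces $\tdc(G) = 3$ by part~(b). All of the above runs in linear time because a chain ordering, and hence the chain partition and the integer $k$, can be computed in linear time \cite{chain}.

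\medskip
\noindent\textbf{Anticipated main obstacle.} The genuinely new content is the $3$-coloring construction for $k = 2$ in part~(b); the rest is bookkeeping over Lemma~\ref{lem:chain}. The subtle point there is checking that the color class $X_1 \cup Y_2$ is independent (equivalently that $X_1$ and $Y_2$ are completely nonadjacent), which is precisely where the chain structure $N(x_1) \subseteq N(x_2)$ is used, and verifying that each of the three color classes is genuinely nonempty so that the coloring really uses three colors rather than fewer. I expect no serious difficulty beyond carefully invoking the neighbourhood formulas $N(x) = \bigcup_{j \le i} Y_j$ and $N(y) = \bigcup_{j \ge i} X_j$ from the chain partition.
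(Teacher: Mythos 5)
Your proposal is correct and follows essentially the same route as the paper: the forward directions of all three parts are read off from Lemma~\ref{lem:chain}, and the converse for $k=2$ uses exactly the same three-color construction (color classes $Y_1$, $X_2$, and $X_1 \cup Y_2$), which the paper states with less verification detail. Your additional checks of properness, nonemptiness, and the total-domination condition are accurate but do not change the argument.
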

\begin{proof}
Let $G = (X, Y, E)$ be an isolate-free chain graph, and let $G$ have a chain partition of length~$k$. By Lemma~\ref{lem:chain}(b), we have $2 \le \tdc(G) \le 4$. By Lemma~\ref{lem:chain}(c), if $\tdc(G) = 2$, then $k=1$. Conversely, let $k = 1$. Then, $G$ is a complete bipartite graph and every proper coloring of $G$ is a TD-coloring of $G$. Therefore, $\tdc(G) = 2$. This proves part~(a).

To prove part~(b), let $\tdc(G) = 3$. Using Lemma~\ref{lem:chain}(d) and part~(a) above, we infer that $k = 2$. To prove the converse, suppose that $k=2$. By Lemma~\ref{lem:chain}(c), we have $\tdc(G) \ge 3$. Let $\cal{H}$ be a coloring of the vertices of $G$ defined as follows. Color each vertex in $Y_1$ with color~$1$, color each vertex in $X_2$ with color~$2$, and color the vertices in $X_1 \cup Y_2$ with color~$3$. Since the resulting coloring $\cal{H}$ is a TD-coloring of $G$ using $3$ colors, and so $\tdc(G) \le  3$. Consequently, $\tdc(G) = 3$. This proves part~(b).

To prove part~(c), suppose that $\tdc(G) = 4$. By parts~(a) and~(b) above, $k \ge 3$. Conversely, if $k \ge 3$, then by using Lemma~\ref{lem:chain}(d), we have $\tdc(G) = 4$. This proves part~(c).
\end{proof}

A chain ordering of a chain graph can be obtained in linear-time \cite{chain}. A chain partition of a chain graph can also be computed in linear-time. Therefore, for a chain graph $G$, $\tdc(G)$ can also be computed in linear-time.

If $G$ is a bipartite graph, then as shown in~\cite{DC4}, $\gamma(G)\le \chi_d(G) \le \gamma(G)+2$. Further, $G$ is a complete bipartite graph if and only if $\chi_d(G)=2$~\cite{DC4}. Observe that if $G$ is a star graph $K_{1,k}$, for some $k \ge 1$, then $\gamma(G)=1$ and $\chi_d(G)=2$. Now, if $G$ is a connected chain graph different from a star graph, then using similar arguments as employed in the proofs of Lemma~\ref{lem:chain} and Theorem~\ref{chain:thm}, we remark that analogous bounds and characterizations holds for $\chi_d(G)$ of the bipartite graph $G$ as well.

\section{NP-Completeness Results}

\noindent
In this section, we study the decision version of the \textsc{Total Dominator Coloring} problem, abbreviated as the TDCD problem and we prove that the TDCD problem is NP-complete for planar graphs, connected bipartite graphs and split graphs. 
%
The following result regarding the DCD problem is known.
\begin{thm}{\rm \cite{DC2}}
\label{DC}
DCD problem is NP-complete for split graphs.
\end{thm}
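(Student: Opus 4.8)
The plan is to establish two things: that the DCD problem restricted to split graphs lies in NP, and that it is NP-hard via a polynomial-time reduction from a suitable NP-complete problem. Membership in NP is routine: a proposed dominator coloring using at most $k$ colors is a certificate of polynomial size, and one checks in polynomial time both that the coloring is proper and that every vertex either forms a singleton color class (and so dominates its own class) or is adjacent to every vertex of some other color class. The substance of the theorem is therefore the hardness. Since the dominator condition — each vertex must dominate all the vertices of some color class — is at heart a covering requirement, I would reduce from a covering-type problem; a natural choice is the \textsc{Dominating Set} problem restricted to split graphs (equivalently \textsc{Set Cover}), which is already known to be NP-complete, although one could instead reduce directly from \textsc{3-SAT}.

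First I would construct, from an instance of the source problem, a split graph $G = (C \cup I, E)$ whose clique $C$ supplies a \emph{forced baseline} of $|C|$ distinct colors, since every proper coloring must give the clique vertices pairwise different colors, and whose independent set $I$ carries the combinatorial data. The governing design principle is that any clique vertex left as a singleton color class is dominated by every vertex adjacent to it (and by itself), so by attaching the vertices of $I$ to $C$ in a controlled way one dictates, for each vertex, exactly which color classes it is able to dominate. The number of \emph{additional} colors beyond the baseline — those needed to properly color $I$ and to create the singleton classes that witness the dominator condition — is then engineered so that it meets a prescribed threshold if and only if the source instance is a yes-instance. Concretely, choosing a color class to serve as the dominating class for a family of vertices of $I$ should mirror selecting a set into a cover (or setting a variable true), so that a dominator coloring meeting the color budget $k$ corresponds precisely to a cover of the required size (or to a satisfying assignment).

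For correctness one argues both implications: a solution to the source instance yields an explicit dominator coloring of $G$ with at most $k$ colors, and conversely any dominator coloring of $G$ using at most $k$ colors can be normalized — pushing repeated colors onto the clique baseline and reading off which classes the vertices of $I$ dominate — to extract a solution of the source instance of the same size. The main obstacle, and the point requiring the most care in the construction, is the simultaneous control of the two constraints: the clique forces a large and rigid palette, so the reduction must be arranged so that the \emph{only} remaining freedom in the coloring is exactly the combinatorial choice being encoded, and so that the dominator condition cannot be satisfied ``for free'' through incidental adjacencies to clique singletons. Ensuring that the gadget is genuinely a split graph — a clique together with a truly independent set $I$, with no stray edges inside $I$ — while still rigidly coupling the color budget to the source instance is the delicate engineering step on which the whole reduction rests.
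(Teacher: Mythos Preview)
The paper does not supply its own proof of this statement: Theorem~\ref{DC} is quoted verbatim from~\cite{DC2} and used as a black box to derive Theorem~\ref{NPCsplit} via Lemma~\ref{same}. There is therefore nothing in the present paper to compare your proposal against.

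As to the proposal itself, what you have written is a reasonable outline of how such NP-hardness reductions typically go, but it is not yet a proof. You correctly identify that membership in NP is routine and that the clique forces a baseline of $\omega(G)$ colors; the idea of encoding a \textsc{Set Cover} or \textsc{Dominating Set} instance in the independent side of a split graph is the right intuition. However, you never actually specify the construction: which split graph $G$ is built from a given instance, what the threshold $k$ is in terms of the input parameters, and how the ``normalization'' of an arbitrary dominator coloring is carried out. The sentence ``choosing a color class to serve as the dominating class for a family of vertices of $I$ should mirror selecting a set into a cover'' is the crux, and it is left entirely schematic. Without a concrete gadget and an explicit value of $k$, there is no way to verify the forward and backward implications you describe, and in particular no way to check your own stated concern that the dominator condition is not satisfied ``for free'' by incidental clique singletons. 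To turn this into a proof you would need to write down the reduction and carry out both directions of the correctness argument; as written, this is a plan rather than an argument.
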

For a split graph $G=(K,I,E)$ with $\vert K\vert  = \omega(G)$, here $\omega(G)$ denotes the clique number of $G$, it is known (see~\cite{DC8}) that $\omega(G) \le \chi_d(G) \le \omega(G) + 1$. We show that similar bounds also hold for $\tdc(G)$ of split graphs $G$.

\begin{lem}\label{bound}
If $G=(K,I, E)$ is a connected split graph and $\vert K\vert  = \omega(G)$, then  $\omega(G) \le \tdc(G) \le \omega(G)+1$.
\end{lem}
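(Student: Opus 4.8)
The plan is to prove the statement in two parts, establishing the lower and upper bounds on $\tdc(G)$ separately for a connected split graph $G = (K, I, E)$ with $|K| = \omega(G)$.

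For the lower bound $\tdc(G) \ge \omega(G)$, I would invoke Theorem~\ref{t:thmKa5}, which gives $\tdc(G) \ge \chi(G)$. Since $K$ is a clique of size $\omega(G)$, we have $\chi(G) \ge \omega(G)$, and in fact the two coincide for split graphs; in any case $\tdc(G) \ge \chi(G) \ge \omega(G)$ immediately. This part is routine.

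For the upper bound $\tdc(G) \le \omega(G) + 1$, the plan is to exhibit an explicit TD-coloring using $\omega(G)+1$ colors. First I would properly color the clique $K$ with the colors $1, 2, \ldots, \omega(G)$, assigning each vertex of $K$ its own distinct color. Every vertex in $I$ has all its neighbours in $K$, so each vertex of $I$ sees at most $\omega(G) - 1$ distinct colors among its neighbours (it cannot be adjacent to all of $K$, else $K$ would not be a maximum clique — here one must be slightly careful, but since $G$ is connected and $|K| = \omega(G)$, a vertex of $I$ adjacent to all of $K$ would create a clique of size $\omega(G)+1$, a contradiction); hence each vertex of $I$ can be colored with some color already used on $K$ that is missing from its neighbourhood, OR, to keep things uniform, I would simply assign \emph{every} vertex of $I$ a single brand-new color $\omega(G)+1$. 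This is a proper coloring since $I$ is independent. Now I must check the total-domination condition: every vertex of $G$ must be adjacent to all vertices of some color class. Each color class $\{v\}$ for $v \in K$ is a singleton, so any vertex adjacent to $v$ totally dominates that class. Since $G$ is connected and every vertex of $I$ has a neighbour in $K$, each vertex of $I$ totally dominates the singleton class of some vertex of $K$. For a vertex $u \in K$: if $u$ has a neighbour $w \in K$, then $u$ totally dominates the singleton class $\{w\}$; the only problem is an isolated-in-$K$ vertex $u$, i.e.\ $\omega(G) = 1$, which forces $G$ to be a star — handle that case directly. Also one must ensure the color class $I$ (color $\omega(G)+1$), if it has more than one vertex, is totally dominated by something: a vertex of $K$ adjacent to all of $I$ would do, but this need not exist. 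This is why the cleaner route is: color each $v \in I$ with a repeated clique-color missing from $N(v)$ when possible, falling back on color $\omega(G)+1$ only as needed, and verify that every vertex still totally dominates some singleton clique-color class.

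\textbf{The main obstacle} I anticipate is the total-domination requirement rather than properness: I must guarantee that \emph{every} vertex — including vertices of $K$ and vertices of $I$ — is adjacent to all of some color class. Singleton classes make this easy provided the relevant adjacencies exist, so the real work is a careful case analysis of small or degenerate configurations (notably $\omega(G) = 1$, i.e.\ stars, and the behaviour of vertices in $I$ of degree one) together with the argument that no vertex of $I$ is adjacent to all of $K$. Once those cases are dispatched, the bound $\tdc(G) \le \omega(G) + 1$ follows from the explicit coloring, and combined with the lower bound we obtain $\omega(G) \le \tdc(G) \le \omega(G) + 1$.
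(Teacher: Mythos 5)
Your lower bound and your first construction (a distinct color on each vertex of $K$ and one brand-new color on all of $I$) are exactly the paper's proof, and that construction already works as stated. Where you go astray is the claim that you ``must ensure the color class $I$ \dots is totally dominated by something.'' That is not part of the definition of a TD-coloring: the requirement is that every \emph{vertex} totally dominates some color class, not that every color class is totally dominated by some vertex. So the multi-vertex class $I$ needs no witness at all. With the simple coloring, every vertex of $I$ has all its neighbours in $K$ and at least one of them (the graph is isolate-free and connected), hence totally dominates that singleton clique class; and every vertex of $K$ totally dominates the singleton class of any other clique vertex, which exists because $\vert K\vert=\omega(G)\ge 2$ (a connected non-trivial graph has an edge). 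This gives $\tdc(G)\le\omega(G)+1$, and this is precisely the paper's argument (the paper treats the degenerate $\vert K\vert=1$ star case separately; note also that $\omega(G)=1$ cannot occur here, contrary to your aside, since it would mean $G$ is edgeless).

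The ``cleaner route'' you fall back on -- recoloring each $v\in I$ with a clique color missing from $N(v)$ whenever possible -- is not only unnecessary but genuinely flawed: once clique colors are reused on $I$, the clique color classes are no longer singletons, and the deferred verification that ``every vertex still totally dominates some singleton clique-color class'' can fail. For example, let $K=\{a,b,c\}$ be a triangle and $I=\{p,q,r\}$ with $p$ adjacent to $b,c$, $q$ adjacent to $a,c$, and $r$ adjacent to $a,b$. Reusing clique colors forces the classes $\{a,p\}$, $\{b,q\}$, $\{c,r\}$, and then $p$ totally dominates no class (indeed $\tdc(G)=4=\omega(G)+1$ for this graph), so your preferred scheme outputs a proper coloring that is not a TD-coloring. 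Keep the all-of-$I$-gets-one-new-color coloring; the spurious requirement on the class $I$ was the only thing pushing you away from it.
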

\begin{proof}
Let $G = (K,I,E)$ be a connected split graph and $\vert K\vert  = \omega(G)$. We note that $\tdc(G) \ge \chi(G) \ge \omega(G)$. Hence, it suffices for us to show that $\tdc(G) \le \omega(G)+1$. For this purpose, we give a TD-coloring using $\omega(G)+1$ colors. Let $\cal{H}$ be a coloring of the vertices of $G$ defined as follows. We color each vertex in $K$ with a unique color, and color the remaining vertices in $I$ with a new color. Since $I$ is an independent set, $\cal{H}$ is indeed a proper coloring of $K \cup I$. Further, $\vert {\cal{H}} \vert  = \omega(G)+1$. If $\vert K\vert =1$, then $G$ is a star graph, the vertex in the clique $K$ totally dominates the color class containing $I$, and each vertex in the set $I$ totally dominates a color class containing $K$. If $\vert K\vert  \ge 2$, then every vertex in $G$ totally dominates a color class that is contained in $K$. In both cases, $\cal{H}$ is a TD-coloring of $G$ that uses $\omega(G)+1$ colors. Hence, $\tdc(G) \le \vert {\cal{H}}\vert  = \omega(G) + 1$ and the result follows.
\end{proof}

The following corollary follows directly follows from Lemma~\ref{bound}.

\begin{co}\label{w+1}
Let $G = (K,I,E)$ be a connected split graph with $\vert K\vert  =\omega(G)$. If $\chi_d(G) = \omega(G)+1$, then $\tdc(G) = \omega(G)+1$.
\end{co}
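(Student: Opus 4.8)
The plan is to sandwich $\tdc(G)$ between two copies of $\omega(G)+1$. The upper bound $\tdc(G)\le\omega(G)+1$ is already supplied by Lemma~\ref{bound}, so the whole task reduces to establishing the lower bound $\tdc(G)\ge\chi_d(G)$, after which the hypothesis $\chi_d(G)=\omega(G)+1$ finishes the argument. Concretely, once we know $\omega(G)+1=\chi_d(G)\le\tdc(G)\le\omega(G)+1$, equality throughout is forced.

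The key (and only substantive) step is the inequality $\chi_d(G)\le\tdc(G)$, which in fact holds for every isolate-free graph. To see this, take any $\tdc$-coloring $\mathcal{H}$ of $G$ and check that it is already a dominator coloring: it is proper by definition, and for each vertex $v$ there is, by the TD-condition, a color class $R$ with $v\notin R$ such that $v$ is adjacent to every vertex of $R$; then $R\subseteq N_G(v)\subseteq N_G[v]$, so $v$ dominates the color class $R$. Hence $\mathcal{H}$ witnesses a dominator coloring of $G$ with $\tdc(G)$ colors, giving $\chi_d(G)\le\tdc(G)$.

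Putting the pieces together: by Lemma~\ref{bound}, $\tdc(G)\le\omega(G)+1$; by the inclusion of TD-colorings into dominator colorings just noted, $\chi_d(G)\le\tdc(G)$; and by hypothesis $\chi_d(G)=\omega(G)+1$. Combining these, $\omega(G)+1=\chi_d(G)\le\tdc(G)\le\omega(G)+1$, so $\tdc(G)=\omega(G)+1$.

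I expect no real obstacle here. The only point that needs a moment of care is the direction of the containment ``every TD-coloring is a dominator coloring'': one must note that although the totally-dominated color class $R$ is required to differ from the class of $v$, the domination condition only asks for $R\subseteq N_G[v]$, which is still satisfied. Everything else is a one-line chain of inequalities using Lemma~\ref{bound} and the hypothesis.
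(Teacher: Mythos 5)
Your proposal is correct and follows essentially the same route as the paper: the paper states the corollary as following directly from Lemma~\ref{bound}, implicitly using the standard fact $\chi_d(G)\le\tdc(G)$ (every TD-coloring is a dominator coloring), which you simply make explicit before sandwiching $\omega(G)+1=\chi_d(G)\le\tdc(G)\le\omega(G)+1$.
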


Observe that for a star graph $G$, $\chi_d(G)=\tdc(G) =2$. Next, we show that for any split graph $G$, $\chi_d(G)$ and $\tdc(G)$ are equal.
\vspace*{-.1 cm}
\begin{lem}
\label{same}
If $G$ is a connected split graph with split partition $(K,I)$, where $\vert K\vert  =\omega(G) \ge 2$, then $\chi_d(G)=\tdc(G)$.
\end{lem}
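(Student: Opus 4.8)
The plan is to show the two inequalities $\tdc(G) \le \chi_d(G)$ and $\chi_d(G) \le \tdc(G)$ separately, exploiting the structure of a connected split graph $G = (K,I,E)$ with $|K| = \omega(G) \ge 2$. Throughout, the key structural fact is that every vertex of $G$ dominates \emph{or} totally dominates a color class only through vertices of $K$ in a useful way: $K$ is a clique of size $\omega(G)$, so in any proper coloring the $\omega(G)$ vertices of $K$ receive pairwise distinct colors, and any color class meeting $K$ is either a singleton inside $K$ or consists of one vertex of $K$ together with some vertices of $I$.

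First I would prove $\tdc(G) \le \chi_d(G)$. Take a $\chi_d$-coloring $\cal{C}$ of $G$. By Lemma~\ref{bound} we already know $\tdc(G) \le \omega(G)+1$, and by the known bound $\omega(G) \le \chi_d(G)$, so the only case to worry about is $\chi_d(G) = \omega(G)$; if $\chi_d(G) = \omega(G)+1$ then Corollary~\ref{w+1} gives $\tdc(G) = \omega(G)+1 = \chi_d(G)$ immediately. So assume $\chi_d(G) = \omega(G)$. Then $\cal{C}$ uses exactly $\omega(G)$ colors, which forces every color class to meet $K$ (since $K$ needs $\omega(G)$ distinct colors), i.e. each color class is a vertex of $K$ together with some subset of $I$. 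I would argue that such a coloring is automatically a TD-coloring: for a vertex $v \in K$, since $|K| \ge 2$, pick any other $w \in K$; the color class of $w$ lies in $N(v)$? — not quite, because that class may include neighbours of $w$ in $I$ that are not adjacent to $v$. This is the technical subtlety. The fix: in a dominator coloring with $\omega(G)$ colors, each $v$ dominates some color class; one shows that class must be a \emph{singleton} $\{u\}$ with $u \in K$ (any non-singleton class containing a vertex of $I$ cannot be dominated by an outside vertex in a split graph of this type without creating the clique structure contradicting minimality), and then $v$ is adjacent to $u$, so $v$ totally dominates $\{u\}$. Vertices in $I$ are handled symmetrically. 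This yields that $\cal{C}$ is a TD-coloring, hence $\tdc(G) \le \chi_d(G)$.

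Next I would prove $\chi_d(G) \le \tdc(G)$. Take a $\tdc$-coloring $\cal{H}$ of $G$ with $\tdc(G)$ colors. Since $\tdc(G) \ge \omega(G)$, if $\tdc(G) = \omega(G)+1$ we are done by the known upper bound $\chi_d(G) \le \omega(G)+1$. So assume $\tdc(G) = \omega(G)$; then again every color class of $\cal{H}$ contains exactly one vertex of $K$. I must show $\cal{H}$ is also a dominator coloring. Each vertex $v$ totally dominates some color class $R$ of $\cal{H}$, so $v$ is adjacent to every vertex of $R$, in particular $v$ dominates $R$, and $R \ne$ the class of $v$. Hence $\cal{H}$ is a dominator coloring directly from the definitions — total domination of a color class trivially implies domination of it. Therefore $\chi_d(G) \le |\cal{H}| = \tdc(G)$.

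The main obstacle I anticipate is the direction $\tdc(G) \le \chi_d(G)$ in the case $\chi_d(G) = \omega(G)$: one must rule out that a vertex's dominated color class is a "bad" non-singleton class $\{u\} \cup (\text{some of }I)$ that it dominates but does not totally dominate (because it fails to be adjacent to the $I$-part). Resolving this requires a careful look at which color classes can be dominated by an external vertex in a split graph whose clique has size exactly $\omega(G)$: a vertex $v$ dominating a class $R$ with $R \cap I \ne \emptyset$ would have to be adjacent to every vertex of $R \cap I$, and since those vertices of $I$ are pairwise non-adjacent this does not build a larger clique, so I may instead need to \emph{re-color}: move each vertex of $I$ out of any class containing a vertex of $K$ into a separate class — but that would use $\omega(G)+1$ colors. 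The cleaner route, which I would pursue, is to observe that when $|K|=\omega(G)\ge 2$, a vertex of $K$ can always choose to dominate the \emph{singleton} class of another clique vertex (if such a singleton class exists), and to show that in a $\chi_d$-coloring with $\omega(G)$ colors at least $|K|-1 \ge 1$ of the clique vertices lie in singleton classes, which suffices for every clique vertex to totally dominate one of them; vertices of $I$ totally dominate the singleton class of their neighbour in $K$. Making this counting argument precise is the crux of the proof.
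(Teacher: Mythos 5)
Your overall scheme (split on whether the optimal dominator coloring uses $\omega(G)$ or $\omega(G)+1$ colors, settle the second case via Corollary~\ref{w+1}, and note that the direction $\chi_d(G)\le\tdc(G)$ is immediate because every TD-coloring is a dominator coloring) matches the paper in spirit, but the step you yourself flag as the crux is not just unfinished — the patches you propose rest on false claims, and you have also misdiagnosed where the difficulty lies. First, a vertex $v$ can dominate a color class without totally dominating it only when that class is $v$'s own solitary class $\{v\}$: if $v$ dominates a class not containing $v$, it is adjacent to every vertex of it, hence totally dominates it. So your ``bad'' scenario of a vertex dominating, but not totally dominating, a non-singleton class $\{u\}\cup(\text{some of }I)$ cannot occur; the genuine problem is a clique vertex whose only dominated class is its own singleton. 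Second, both of your fixes fail. Take $K=\{a,b,c\}$, $I=\{i_1,i_2\}$ with $N(i_1)=\{a,b\}$ and $N(i_2)=\{a,c\}$, and the coloring with classes $\{a\}$, $\{b,i_2\}$, $\{c,i_1\}$. This is a $\chi_d$-coloring using $\omega(G)=3$ colors in which the outside vertex $a$ dominates the non-singleton class $\{b,i_2\}$ that contains a vertex of $I$ (refuting your parenthetical claim, with no violation of clique maximality), and only one clique vertex lies in a singleton class, not $\vert K\vert-1=2$ (refuting your counting claim). Your statement that each vertex of $I$ ``totally dominates the singleton class of its neighbour in $K$'' likewise presumes singleton classes that need not exist; the correct easy argument in the $\omega(G)$-color case is that an $I$-vertex's own class contains a (non-adjacent) clique vertex, so it cannot dominate its own class and hence totally dominates some other class.

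What is missing is the argument the paper uses for a solitary clique vertex $v$: pick $u\in K\setminus\{v\}$ and let $V_1$ be its class; if $v$ does not totally dominate $V_1$, there is $u'\in V_1\cap I$ with $vu'\notin E$; since $u'$ shares its class with $u$ it cannot dominate its own class, so it dominates a class contained in $N(u')\subseteq K\setminus\{v\}$, which (being inside a clique) is a singleton $\{w\}$, and then $v$ totally dominates $\{w\}$ because $K$ is a clique. With this in hand the paper does not even need your case split: it shows that in any $\chi_d$-coloring every clique vertex totally dominates some class, so a failing vertex must lie in $I$ and occupy a solitary class inside $I$, forcing $\chi_d(G)=\omega(G)+1$, after which Corollary~\ref{w+1} finishes the proof. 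Until you supply an argument of this kind for the solitary clique vertex, the inequality $\tdc(G)\le\chi_d(G)$ in the case $\chi_d(G)=\omega(G)$ remains unproved.
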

\begin{proof}
Let $G = (K,I,E)$ be a connected split graph with $\vert K\vert  = \omega(G) \ge 2$. Let $\cal{H}$ be a $\chi_{d}$-coloring of $G$. If $\cal{H}$ is a TD-coloring of $G$, then the desired result is immediate. Now, if $\cal{H}$ is not a TD-coloring of $G$. 
Let the set $S$ contain all those vertices from $G$ that does not totally dominate any color class in the dominator coloring $\cal{H}$ of $G$.

First, we show that $K \cap S = \emptyset$. Let $v \in K \cap S$. Since $\vert K\vert  \ge 2$,
there exists $u \in K$ and $u \ne v$. Let $V_1^{\cal{H}}$ be the color class of $\cal{H}$ such that~$u \in V_1^{\cal{H}}$. Since $v \in S$, $v$ does not totally dominates $V_1^{\cal{H}}$, implying that there exists a vertex $u' \in I$ such that $u' \in V_1^{\cal{H}}$ and $v u' \notin E(G)$.
Now, $u'$ necessarily dominates some color class other than $V_1^{\cal{H}}$, say $V_2^{\cal{H}}$. Since $N(u') \subseteq K$ and $u'v\notin E(G)$, it follows that $V_2^{\cal{H}} \subset K \setminus \{v\}$ and $V_2^{\cal{H}}$ is a solitary color class. Thus, $v$ totally dominates $V_2^{\cal{H}}$, contradicting our supposition that $v \notin S$. Therefore, $K \cap S = \emptyset$. Hence, $S \subseteq I$. 

Now, let $v \in S \cap I$. Since $\cal{H}$ is a dominator coloring of $G$ and $v \in S$, the vertex $v$ dominates its own color class, say $V_1^{\cal{H}}$. Necessarily, $V_1^{\cal{H}}$ is a solitary color class and this color class is contained in $I$. To color the vertices of $K$, an additional $\omega(G)$ colors are required in the coloring $\cal{H}$. Therefore, $\chi_d(G) = \vert {\cal{H}}\vert  \ge \omega(G)+1$ and since $\chi_d(G) \le \omega(G)+1$, we get $\chi_d(G) = \omega(G) + 1$. Hence by Corollary~\ref{w+1}, $\tdc(G) = \omega(G)+1$.  
\end{proof}

By Lemma~\ref{same}, the problem of computing $\tdc(G)$ and $\chi_d(G)$ are equivalent, for a connected split graph $G$. Clearly, the TDCD problem is in NP. Now, from Theorem~\ref{DC} and Lemma~\ref{same}, we obtain the following result.
\begin{thm}\label{NPCsplit}
TDCD problem is NP-complete for split graphs.
\end{thm}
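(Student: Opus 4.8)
The plan is to reduce directly from the \textsc{Dominator Coloring} decision problem (DCD) on split graphs, which is NP-complete by Theorem~\ref{DC}, and to observe that on connected split graphs with $|K| = \omega(G) \ge 2$ the two problems coincide by Lemma~\ref{same}. First I would note that membership of TDCD in NP is trivial: given a coloring, one can check in polynomial time that it is proper and that every vertex is adjacent to all vertices of some color class other than its own. So the entire content is the hardness.

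For the reduction, given an instance $(G,k)$ of DCD where $G = (K,I,E)$ is a split graph, I would first handle trivialities: if $G$ is disconnected or $|K| = 1$ (i.e.\ $G$ is a star, where $\chi_d(G) = \tdc(G) = 2$ is already known), the answer can be decided directly, so we may assume $G$ is connected with $\omega(G) = |K| \ge 2$ (one may also need the standard preprocessing that makes $K$ a maximum clique, or simply invoke that Theorem~\ref{DC}'s hardness already holds for such instances). Then Lemma~\ref{same} gives $\chi_d(G) = \tdc(G)$ outright, so $(G,k)$ is a yes-instance of DCD if and only if $(G,k)$ is a yes-instance of TDCD. The map is the identity, hence polynomial, and correctness is immediate from the lemma.

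The main obstacle — which is really the only thing to get right — is making sure the equivalence $\chi_d(G) = \tdc(G)$ from Lemma~\ref{same} applies to \emph{all} instances produced by (or relevant to) the known hardness result, not just to a convenient subclass. Concretely I would check that the NP-hardness of DCD for split graphs in Theorem~\ref{DC} remains NP-hard when restricted to connected split graphs with a clique partition $(K,I)$ satisfying $|K| = \omega(G) \ge 2$; if the cited construction does not already guarantee this, a small padding gadget (e.g.\ attaching a fixed clique to force connectivity and $\omega \ge 2$ while shifting $\chi_d$ by a known additive constant) closes the gap. Once that is in place, the theorem follows by combining Theorem~\ref{DC} and Lemma~\ref{same}, exactly as the sentence preceding the statement indicates.

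I would then write the proof in one or two lines: the TDCD problem is in NP, and Lemma~\ref{same} shows that for a connected split graph $G$ with $|K| = \omega(G) \ge 2$ we have $\tdc(G) = \chi_d(G)$, so the identity map is a polynomial-time reduction from DCD on (such) split graphs to TDCD on split graphs; since DCD is NP-complete on split graphs by Theorem~\ref{DC}, so is TDCD. No calculation is needed beyond the bookkeeping in the previous paragraph.
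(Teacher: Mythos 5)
Your proposal is correct and follows essentially the same route as the paper: membership in NP is immediate, and the hardness comes from combining Theorem~\ref{DC} with Lemma~\ref{same} via the identity reduction from DCD on (connected) split graphs. Your extra care about connectedness, the star case, and instances with $\vert K\vert = \omega(G) \ge 2$ only makes explicit what the paper leaves implicit.
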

Next, we prove the NP-completeness of the TDCD problem in case of bipartite graphs. In order to do that we require the following result.
\begin{thm}{\rm \cite{bip}}
\label{approxbip}
For any graph $G$, the problem of determining a $\gamma_t$-set of $G$ cannot be approximated to within a factor of $c \ln(n)$ in polynomial time, for any constant $c<1$, unless $P=NP$. This holds true for bipartite graphs as well.
\end{thm}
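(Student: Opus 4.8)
Since this is an approximation-hardness statement, the plan is to give a gap-preserving reduction from a problem whose $(1-\varepsilon)\ln n$ inapproximability is already available, namely \textsc{Minimum Dominating Set}: for every $\varepsilon>0$ it is NP-hard to output, in a graph on $n$ vertices, a dominating set of size at most $(1-\varepsilon)\ln n$ times the minimum one (this is equivalent to the classical \textsc{Set Cover} threshold, building on the results of Feige and of Dinur--Steurer). I would start from an instance $H=(V,E)$ with $n=|V|$, which may be assumed to be isolate-free (the standard hard instances are), and build a bipartite graph $G$ with the two parts $V$ and $V'=\{v':v\in V\}$, a disjoint copy of $V$, joining $u\in V$ to $v'\in V'$ exactly when $v\in N_H[u]$ (equivalently when $u\in N_H[v]$). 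Then $G$ is bipartite, has exactly $2n$ vertices, and is isolate-free because every closed neighbourhood in $H$ has size at least~$2$.

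The next step is to prove the exact identity $\gamma_t(G)=2\gamma(H)$. For the upper bound, if $S$ is a dominating set of $H$, then $S\cup S'$ with $S'=\{v':v\in S\}$ is a TD-set of $G$: each $v'\in V'$ has a neighbour in $S$ because $S$ dominates $v$ in $H$, and symmetrically each $u\in V$ has a neighbour in $S'$. For the lower bound, note that $N_G(v')=N_H[v]\subseteq V$ for every $v\in V$; hence if $D$ is any TD-set of $G$, then $D\cap V$ meets $N_H[v]$ for every $v$, i.e.\ $D\cap V$ is a dominating set of $H$, and likewise $\{v:v'\in D\}$ is a dominating set of $H$, so $|D|=|D\cap V|+|D\cap V'|\ge 2\gamma(H)$. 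With this identity in hand, a hypothetical polynomial-time algorithm returning a TD-set of size at most $c\ln|V(G)|\cdot\gamma_t(G)$ with $c<1$ would, after keeping the smaller of the two dominating sets $D\cap V$ and $\{v:v'\in D\}$, produce a dominating set of $H$ of size at most $\tfrac12\,c\ln(2n)\cdot 2\gamma(H)=c\ln(2n)\cdot\gamma(H)$; with $\varepsilon=\tfrac{1-c}{2}$ we have $c\ln(2n)\le(1-\varepsilon)\ln n$ for all large $n$, contradicting the hardness of \textsc{Dominating Set}. Because $G$ is bipartite throughout, this is exactly the bipartite strengthening claimed.

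The part I expect to require the most care is \emph{preserving the hardness parameter}: the $(1-\varepsilon)\ln n$ threshold transfers only if the vertex count of the reduced instance is essentially linear (at worst a fixed polynomial) in the parameter $n$ driving the source hardness, and if the source optimum is not bounded by a constant --- otherwise the additive slack from any auxiliary gadget, together with the rounding hidden in ``$\ge$'' versus the exact ``$=$'', could erode the constant $c$. Here both points are benign: the construction above has $|V(G)|=2n$ exactly, the hard \textsc{Set Cover}/\textsc{Dominating Set} instances may be taken with optimum tending to infinity, and if one additionally wants a \emph{connected} bipartite instance it suffices to add one new vertex adjacent to all of $V$ together with a single pendant leaf, which changes $\gamma_t$ by only an additive constant. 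Making these bookkeeping steps precise (and checking them against the argument in~\cite{bip}) is all that remains.
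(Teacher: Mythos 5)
This statement is not proved in the paper at all: it is imported verbatim from the cited reference (Chleb\'{\i}k and Chleb\'{\i}kov\'{a}), so there is no in-paper argument to compare yours against. Judged on its own, your reduction is correct. The bipartite ``closed-neighbourhood incidence'' construction ($u\in V$ adjacent to $v'\in V'$ iff $v\in N_H[u]$) does give the exact identity $\gamma_t(G)=2\gamma(H)$: the upper bound via $S\cup S'$ and the lower bound via the observation that $N_G(v')=N_H[v]$ (so $D\cap V$ and the preimage of $D\cap V'$ are each dominating sets of $H$) are both sound, and because $|V(G)|=2n$ the multiplicative loss is only the additive $c\ln 2$ term, which your choice $\varepsilon=(1-c)/2$ absorbs for large $n$. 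Two minor remarks: the graph $G$ is isolate-free even without assuming $H$ is, since $u\in N_H[u]$ makes every $u$ adjacent to its own copy $u'$; and the connectivity patch is unnecessary for the statement as quoted, which only claims hardness for bipartite graphs. What your route buys, compared with simply citing the literature as the paper does, is a self-contained transfer of the $(1-\varepsilon)\ln n$ Dominating Set hardness to total domination with bipartiteness obtained for free; the one external ingredient you still rely on --- the $(1-\varepsilon)\ln n$ NP-hardness of Dominating Set with instance size polynomially tied to the Set Cover universe --- is standard and correctly flagged in your last paragraph as the only bookkeeping that needs checking.
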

From Theorem~\ref{approxbip}, it follows that it's not possible to approximate $\gamma_t(G)$ below a factor of $\ln(n)$. When $n \ge 8$, we note that $\ln(n)>2$, and so $\gamma_t(G)$ cannot be approximated within an approximation ratio of~$2$.
\begin{co}\label{2approxbip}
If $n \ge 8$, then the problem of determining a $\gamma_t$-set of $G$ cannot be approximated to within a factor of~$2$ in polynomial time, unless $P=NP$. This is true for bipartite graphs as well.
\end{co}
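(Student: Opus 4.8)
The plan is to deduce this corollary directly from Theorem~\ref{approxbip} by a short numerical comparison, with no new graph-theoretic ingredient needed. The one quantitative fact to exploit is that $\ln 8 = 3\ln 2 \approx 2.079 > 2$, while $\ln 7 \approx 1.946 < 2$; this is precisely why the threshold $n \ge 8$ appears. Concretely, first I would fix a constant $c$ with $2/\ln 8 < c < 1$ (for instance $c = 0.97$, since $2/\ln 8 \approx 0.962$). Because $\ln n$ is increasing in $n$, this choice guarantees $c\ln n \ge c\ln 8 > 2$ for every integer $n \ge 8$.

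Next I would argue by contradiction. Suppose there were a polynomial-time algorithm $A$ that, on every graph $G$ on $n \ge 8$ vertices, returns a TD-set of $G$ of cardinality at most $2\,\gamma_t(G)$. Then for such $G$ the output of $A$ has cardinality at most $2\,\gamma_t(G) \le c\ln(n)\,\gamma_t(G)$, using $2 \le c\ln n$ from the previous paragraph; for the finitely many graphs on at most $7$ vertices one may just compute $\gamma_t$ exactly in constant time, so these instances cause no loss. Hence $A$ (augmented this way) is a polynomial-time $c\ln(n)$-approximation for the total domination number with $c < 1$, contradicting Theorem~\ref{approxbip}. Therefore no such $A$ exists unless $P = NP$. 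Since Theorem~\ref{approxbip} is stated to hold for bipartite graphs as well, the identical argument restricted to bipartite inputs yields the bipartite case.

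The main (indeed the only) obstacle is picking the constant $c$ correctly: it must satisfy $c < 1$ strictly yet still $c\ln 8 > 2$, which is feasible exactly because $\ln 8 > 2$. If one tried to push the statement down to $n \ge 7$ it would break at this step, confirming that $n \ge 8$ is the right hypothesis. Everything else is bookkeeping, so the write-up should be only a few lines.
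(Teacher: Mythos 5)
Your proposal is correct and follows essentially the same route as the paper, which likewise deduces the corollary from Theorem~\ref{approxbip} by noting that $\ln(n) > 2$ once $n \ge 8$, so a polynomial-time $2$-approximation would violate the $c\ln(n)$ inapproximability bound. Your explicit choice of a constant $c$ with $2/\ln 8 < c < 1$ and the brute-force handling of graphs on at most $7$ vertices merely make precise the short argument the paper gives informally before stating the corollary.
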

\begin{thm}
TDCD problem is NP-complete for connected bipartite graphs.
\end{thm}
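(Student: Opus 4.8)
The plan is to bypass any gadget construction and instead combine two facts already available: the tight bound $\tdc(G)\le\gamma_t(G)+2$ for bipartite graphs (Corollary~\ref{planarbound}(a)) and the hardness of $2$-approximating the total domination number on bipartite graphs (Corollary~\ref{2approxbip}). What I would prove is that a polynomial-time algorithm for the decision problem would yield a polynomial-time $2$-approximation of $\gamma_t$ on (connected) bipartite graphs, which is impossible unless $P=NP$; this is a polynomial-time Turing reduction establishing NP-hardness.

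First I would dispose of membership in NP: given a connected bipartite graph $G$ and an integer $k$, a colouring $c:V(G)\to[k]$ is a polynomial-size certificate, and one checks in polynomial time that $c$ is proper and that every vertex has all of some colour class inside its open neighbourhood. Next, for hardness, I would assume for contradiction that the TDCD problem restricted to connected bipartite graphs is solvable in polynomial time. For an isolate-free connected bipartite $G$ on $n\ge 8$ vertices we have $2\le\tdc(G)\le n$ (colouring all vertices distinctly is already a TD-colouring), so $O(\log n)$ calls to the assumed algorithm compute $\tdc(G)$ exactly by binary search; a routine self-reduction moreover produces an explicit $\tdc$-colouring $\mathcal H$, and then $D_0^{\mathcal H}$, obtained by taking one vertex from each class of $C_0^{\mathcal H}$, is a TD-set of $G$ with $|D_0^{\mathcal H}|=|C_0^{\mathcal H}|\le|\mathcal H|=\tdc(G)$. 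Since $\gamma_t(F)\ge 2$ for every isolate-free graph $F$, Corollary~\ref{planarbound}(a) gives
\[
\gamma_t(G)\ \le\ |D_0^{\mathcal H}|\ \le\ \tdc(G)\ \le\ \gamma_t(G)+2\ \le\ 2\,\gamma_t(G),
\]
so in polynomial time we have computed both a TD-set of $G$ within a factor $2$ of optimal and an estimate $\tdc(G)$ of $\gamma_t(G)$ accurate to within a factor $2$. Either of these contradicts Corollary~\ref{2approxbip}. Hence the TDCD problem is NP-hard, and being in NP it is NP-complete for connected bipartite graphs.

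The step I expect to require the most care is matching the graph class: Corollary~\ref{2approxbip} is phrased for bipartite graphs, while the theorem claims hardness for \emph{connected} bipartite graphs. I would handle this by observing that the inapproximability is already witnessed by connected bipartite instances (the reduction underlying Theorem~\ref{approxbip} produces connected graphs); alternatively, one may start from an arbitrary bipartite instance and first attach a small fixed bipartite gadget linking its components that changes $\gamma_t$ by a known additive amount per component, so that a factor-$2$ approximation on the connected graph transfers back. The residual technicalities — the hypothesis $n\ge 8$ in Corollary~\ref{2approxbip} and the self-reduction used to extract a $\tdc$-colouring — are minor: graphs with $n<8$ are decided in constant time, and if one is uneasy about the self-reduction it suffices to use the inequality $\gamma_t(G)\le\tdc(G)\le 2\gamma_t(G)$ as a direct factor-$2$ approximation of the optimal \emph{value}.
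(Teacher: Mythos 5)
Your proposal is correct and follows essentially the same route as the paper: extract one vertex from each color class of a $\tdc$-coloring to obtain a TD-set of size at most $\tdc(G)\le\gamma_t(G)+2\le 2\gamma_t(G)$, contradicting Corollary~\ref{2approxbip}. The only differences are bookkeeping the paper leaves implicit (binary search to pass from the decision to the optimization version, and the connectivity of the hard instances), which you handle explicitly; note only that your ``routine self-reduction'' to an explicit coloring is not actually routine for TD-colorings, but your value-only fallback (or simply assuming, as the paper does, that the coloring itself is computable) makes it unnecessary.
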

\begin{proof}
Let $G$ be a connected bipartite graph. Clearly, TDCD problem is in NP. It remains to show that TDCD is NP-hard. On the contrary, suppose that the \textsc{Total Dominator Coloring} problem is polynomial time solvable for connected bipartite graphs. Let $\cal{H}$ be a $\tdc$-coloring of $G$ and let $C^{\cal{H}}=\{ V_1^{\cal{H}}, V_2^{\cal{H}}, \ldots, V_{\tdc(G)}^{\cal{H}}\}$ be the collection of color classes of $\cal{H}$. Now, consider the following approximation algorithm for finding a TD-set of the connected bipartite graph $G$:
\begin{algorithm}[]
\label{algor}
\textbf{Input:} A connected bipartite graph $G$.\\
\textbf{Output:} A total dominating set of $G$.\\
Compute a $\tdc$-coloring $\cal{H}$ of $G$.\\
Let $C^{\cal{H}}=\{ V_1^{\cal{H}}, V_2^{\cal{H}}, \ldots, V_{\tdc(G)}^{\cal{H}}\}$ be the collection of color classes of $\cal{H}$.\\
\For{$(i=1$ to $\tdc(G))$}{
Update $D \gets D \cup \{u_i\}$ where $u_i$ is some vertex of $V_i^{\cal{H}}$;\\
 return $D$;
}
\caption{\textbf{APPROX$\_$TDS$(G,\cal{H},C^{\cal{H}})$}}
\end{algorithm}

Note that the time complexity of \textbf{Algorithm 1} is polynomial, as the \textsc{Total Dominator Coloring} problem can be solved in polynomial time for $G$ and each step takes polynomial time. From Corollary~\ref{planarbound}(a), $\tdc(G) \le \gamma_t(G)+2$. The set $D$ obtained from \textbf{Algorithm 1} is a TD-set of cardinality~$\tdc(G) \le \gamma_t(G)+2$. As $\gamma_t(G)\ge 2$, we observe that $\gamma_t(G)+2 \le 2 \gamma_t(G)$. Thus, $D$ is a TD-set of cardinality at most $2\gamma_t(G)$. Therefore, we get a $2$-approximation algorithm for finding a TD-set of $G$, contradicting Corollary~\ref{2approxbip}. Hence, the result follows.
\end{proof}

Lastly, we consider planar graphs and we prove that the decision version of the TD-coloring problem is NP-complete in case of planar graphs using another known NP-complete problem, namely the TD-set problem. We will now formally define the decision version of the TD-set problem. The decision version of the TD-set problem, abbreviated as the TDD problem, takes a graph $G$ and a positive integer $k$ as an input and asks whether there exists a TD-set of size at most $k$. The following result is known regarding the TDD problem for planar graphs.
\begin{thm}{\rm \cite{planar}}
\label{TDplanar}
TDD problem is NP-complete for planar graphs.
\end{thm}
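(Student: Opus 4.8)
The claim is that deciding, given a planar graph $G$ and an integer $k$, whether $\gamma_t(G)\le k$ is NP-complete. Membership in NP is immediate: a purported TD-set is a certificate of linear size, and one checks in linear time that it has size at most $k$ and that every vertex of $G$ has a neighbour in it. So the whole content is NP-hardness, and the plan is to give a polynomial reduction from a problem already known to be NP-complete on planar instances. The natural candidates are \textsc{Vertex Cover} restricted to planar graphs (Garey and Johnson) or, if one wants tighter control of the incidence structure, \textsc{Planar 3-SAT} (Lichtenstein); I would work with the first.

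From a planar graph $H$ with no isolated vertices and an integer $\ell$, I would build a planar graph $G$ whose total dominating sets encode the vertex covers of $H$. The skeleton is the classical edge gadget: subdivide each edge $e=uv$ of $H$ by a new vertex $w_e$. Then to totally dominate $w_e$ a TD-set $D$ of $G$ must have $u\in D$ or $v\in D$, i.e.\ $D\cap V(H)$ must be a vertex cover of $H$. On top of this I would attach, at each original vertex and at each subdivision vertex, a fixed constant-size planar gadget (for instance a short pendant path or a pendant triangle) with a twofold purpose: each gadget is \emph{forced} into every TD-set in a predictable way, so that the contribution of the gadgets to $\gamma_t(G)$ is a known constant depending only on $|V(H)|+|E(H)|$; and each gadget supplies, for an otherwise ``$D$-isolated'' chosen vertex, a neighbour inside $D$, so that the open-neighbourhood requirement of total domination is met without the gadget affecting which original vertices are chosen. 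One then sets $k=\ell+(\text{this fixed overhead})$.

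The two directions of correctness would run as follows. Given a vertex cover $C$ of $H$ with $|C|\le\ell$, put into $D$ the set $C$, all the forced gadget vertices, and — for each $v\in C$ having no neighbour in $C$ — one gadget or subdivision vertex adjacent to $v$; a direct check shows $D$ is a TD-set of $G$ of size at most $k$. Conversely, given a TD-set $D$ of $G$ with $|D|\le k$, one argues that $D$ must contain all the forced gadget vertices, so $|D\cap V(H)|$ is small, and a local exchange argument (replacing a stray gadget vertex by the original vertex it hangs from) lets one assume $D\cap V(H)$ is a vertex cover of $H$ of size at most $\ell$. Planarity of $G$ is clear: a plane embedding of $H$ extends to one of $G$, since every subdivision vertex lies on an existing edge and every gadget is attached at a single vertex, so it can be drawn inside an incident face.

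The step I expect to be the real obstacle is exactly the tension between domination/covering (a closed-neighbourhood condition) and total domination (an open-neighbourhood condition): the bare subdivision reduction forces $D\cap V(H)$ to be a vertex cover, but now \emph{every} vertex, including the chosen original vertices and the subdivision vertices themselves, must have a neighbour inside $D$, and a badly chosen gadget either forces so many vertices that the size of $\gamma_t(G)$ stops tracking $\tau(H)$, or it hands $D$ a way to dodge the vertex-cover constraint altogether. Designing a gadget that is simultaneously planar, forced into $D$ by a fixed amount, and neutral with respect to the choice of original vertices — and then carrying out the case analysis that verifies these three properties — is where essentially all the work lies.
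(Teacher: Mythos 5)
The statement you were asked to prove is not proved in the paper at all: it is quoted as a known result with a citation (Theorem~\ref{TDplanar}) and used as a black box in the reduction showing NP-completeness of the TDCD problem for planar graphs. So there is no in-paper argument to compare against; the only question is whether your blind attempt is itself a complete proof, and it is not. Membership in NP and the choice of source problem (planar \textsc{Vertex Cover}) are fine, and the subdivision idea correctly forces $D \cap V(H)$ to meet every edge of $H$, but the entire load-bearing part of the reduction --- the gadget that makes the overhead a fixed constant and the two-directional size accounting --- is explicitly deferred (``where essentially all the work lies''). A proof that names its hardest step and leaves it open is a plan, not a proof.

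The gap is not cosmetic, because the specific gadgets you float do not obviously work as described. Take a pendant path $v$--$a$--$b$ at each original vertex $v$: totally dominating $b$ forces $a \in D$, but totally dominating $a$ requires $v$ or $b$ in $D$, so the gadget's cost is $1$ when $v$ is chosen and $2$ when it is not; the overhead then depends on the cover rather than being ``a known constant depending only on $|V(H)|+|E(H)|$'', and $\gamma_t(G)$ no longer tracks the vertex cover number by a fixed additive shift. A pendant triangle has a similar issue plus the need to re-verify forcing. The fix is standard but must actually be carried out: for example, attach to each original vertex $v$ a pendant path $v$--$a$--$b$--$c$; then every TD-set must spend at least two vertices in this gadget, and spending exactly $\{a,b\}$ both satisfies the gadget internally and totally dominates $v$, so one can prove $\gamma_t(G) = 2|V(H)| + \tau(H)$ (with $\tau$ the vertex cover number), giving the equivalence with threshold $k = 2|V(H)| + \ell$. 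Until you fix a concrete gadget and verify the forcing lower bound, the upper-bound construction, and the exchange argument in the converse direction, the reduction --- and hence the NP-hardness claim --- is unproven.
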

\begin{thm}
TDCD problem is NP-complete for planar graphs.
\end{thm}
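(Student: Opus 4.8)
The plan is to reduce from the TDD problem restricted to planar graphs, which is NP-complete by Theorem~\ref{TDplanar}; membership of TDCD in NP is immediate, since given a vertex coloring one checks in polynomial time that it is proper and that every vertex is adjacent to all vertices of some color class.

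Given an instance $(G,k)$ of TDD --- where we may assume $G$ is planar, connected and isolate-free, since otherwise the instance is trivial or reduces to this case --- I would build a graph $G'$ from $G$ by attaching to each vertex $v\in V(G)$ a fresh pendant path $v\,a_v\,b_v\,c_v\,d_v$ on four new vertices, so that $d_v$ is a leaf and $c_v$ is its support vertex. Since attaching trees preserves planarity and connectivity, $G'$ is a connected planar graph of size linear in that of $G$, computable in polynomial time. The whole reduction rests on two identities: writing $n=|V(G)|$, I claim
\[
\gamma_t(G') \;=\; 2n+\gamma_t(G) \qquad\text{and}\qquad \tdc(G') \;=\; \gamma_t(G')+1 \;=\; 2n+\gamma_t(G)+1 .
\]
Granting these, one outputs the TDCD instance $(G',\,2n+k+1)$, and then $\tdc(G')\le 2n+k+1$ if and only if $\gamma_t(G)\le k$, which gives the reduction.

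To establish $\gamma_t(G')=2n+\gamma_t(G)$ I would argue that every TD-set of $G'$ must contain each support $c_v$ (to totally dominate $d_v$) and, for each $v$, at least one further vertex to totally dominate $c_v$; the key point is that none of these forced vertices is adjacent to the original vertex $v$, so the vertices of $G$ must still be totally dominated from inside $G$, and a short case analysis over which vertices of the pendant paths are used shows the minimum is $2n+\gamma_t(G)$, attained for instance by $\{c_v,b_v\colon v\in V(G)\}$ together with a $\gamma_t$-set of $G$. For $\tdc(G')=\gamma_t(G')+1$ I would prove both bounds. The upper bound is an explicit coloring: start from a minimum TD-set $D^{*}=\{c_v,b_v\colon v\in V(G)\}\cup D_G$ with $D_G$ a $\gamma_t$-set of $G$, give every vertex of $D^{*}$ its own color, then recolor a suitable set of ``private-neighbor'' vertices with colors already used, and finally color everything still uncolored with one new color --- exactly the device used in the converse direction of Theorem~\ref{tree:char} --- and check that the result is a TD-coloring with $|D^{*}|+1$ colors. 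The lower bound $\tdc(G')>\gamma_t(G')$ follows because $G'$ is never ``extremal'': every support $c_v$ is solitary in any TD-coloring (the argument of Proposition~\ref{P1} is valid for every isolate-free graph), and a counting argument shows the two long pendant paths at two distinct original vertices cannot both be absorbed with only $\gamma_t(G')$ colors --- the same phenomenon by which $\tdc(P_4)=\gamma_t(P_4)+1$ and $\tdc(P_{11})=\gamma_t(P_{11})+1$, even though $\tdc(P_6)=\gamma_t(P_6)$.

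The hard part is precisely the second identity, that is, pinning the gap $\tdc(G')-\gamma_t(G')$ to the constant $1$ uniformly over all planar inputs $G$. Theorem~\ref{t:thmKa5} only gives $\gamma_t(G')\le\tdc(G')\le\gamma_t(G')+\chi(G')\le\gamma_t(G')+4$, and in general this slack is genuinely instance-dependent, so the pendant-path gadget must be verified to force the slack to be exactly $1$ --- neither $0$, which would invalidate the ``no'' direction of the reduction, nor larger, which would invalidate the ``yes'' direction. When $G$ is not a tree, $G'$ is not a tree either, so Theorem~\ref{tree:char} is not directly applicable and the upper-bound coloring and lower-bound counting must be done by hand; carrying this out, together with the verification that a minimum TD-set of $G'$ is forced to contain a $\gamma_t$-set of $G$, is the technical heart of the argument, whereas planarity, connectivity, polynomial size and NP-membership are routine.
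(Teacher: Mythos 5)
Your reduction stands or falls with the second identity, $\tdc(G')=\gamma_t(G')+1$, and this identity is false, so the proposed many-one reduction does not work. Take $G=K_4$ (planar, connected), so $n=4$, $\gamma_t(G)=2$ and, granting your first identity, $\gamma_t(G')=2n+\gamma_t(G)=10$. In any TD-coloring of $G'$, for each $v$ the leaf $d_v$ forces $\{c_v\}$ to be a solitary color class, and $c_v$ (whose only neighbours are $b_v$ and $d_v$) forces the existence of a color class contained in $\{b_v,d_v\}$; these $2n=8$ colors therefore occur only on pendant vertices. The four mutually adjacent original vertices then need four further pairwise distinct colors, so $\tdc(G')\ge 2n+\chi(G)=12>11=\gamma_t(G')+1$, and the instance map $(G,k)\mapsto(G',2n+k+1)$ already fails in the ``yes'' direction. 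The underlying problem is structural: the pendant-path gadget forces $2n$ colors that are useless for coloring $V(G)$, so the slack $\tdc(G')-\gamma_t(G')$ inherits a dependence on $\chi(G)$ (and on how coloring interacts with domination inside $G$), which for planar inputs can be $2$ or more; it is exactly the instance-dependence you flag as ``the technical heart'' but then assert without proof. (A smaller imprecision: in your argument for $\gamma_t(G')=2n+\gamma_t(G)$, the vertex $a_v$ \emph{is} adjacent to $v$, so $v$ need not be totally dominated from inside $G$; the identity is still true, but only via a replacement argument swapping each $a_v$ in a TD-set for a $G$-neighbour of $v$.)

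For comparison, the paper avoids having to pin down the additive slack at all. It argues: if TDCD were polynomial-time solvable on planar graphs, take the disjoint union $G'$ of five copies of $G$, compute a $\tdc$-coloring, pick one vertex per color class to get a TD-set of $G'$ of size at most $\gamma_t(G')+4=5\gamma_t(G)+4$ (Corollary~\ref{planarbound}(b)), and observe that the copy receiving the fewest chosen vertices yields a TD-set of $G$ of size at most $\gamma_t(G)+\tfrac{4}{5}$, hence exactly $\gamma_t(G)$; this contradicts Theorem~\ref{TDplanar}. If you want to salvage a gadget-based Karp reduction in your style, you would need a gadget whose effect on both $\gamma_t$ and $\tdc$ is provably controlled simultaneously, which the pendant path $P_4$ does not provide.
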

\begin{proof}
Clearly, the TDCD problem is in NP. Next, we need to show that the TDCD problem is NP-hard. On the contrary, suppose that the TD-coloring problem is solvable in polynomial time for planar graphs. Then, we claim that total domination problem can be solved in polynomial time for planar graphs, which would contradict Theorem~\ref{TDplanar}.

Let $G$ be a planar graph. From Corollary~\ref{planarbound}(b), we have $\tdc(G) \le \gamma_t(G)+4$. Consider five copies $G_1, G_2, \ldots, G_5$ of the graph $G$, and let $G'$ be the disjoint union of these five copies of $G$. Applying Corollary~\ref{planarbound} to the graph $G'$, $\tdc(G') \le \gamma_t(G')+4$. As TDCD can be solved in polynomial time for $G'$ as well, let $\cal{H}$ be a $\tdc$-coloring of $G'$ and let $C^{\cal{H}}=\{C_1,C_2, \ldots, C_{\tdc(G')}\}$. 

Now, we define $D' = \{u_1,u_2, \ldots, u_{\tdc(G')}\}$, where $u_i \in C_i$ for $1 \le i \le \tdc(G')$. This $D'$ is a TD-set of $G'$ of cardinality at most $\gamma_t(G')+4$. Since $G'$ is union of five copies of $G$, $\gamma_t(G')= 5\gamma_t(G)$. Thus, $D'$ is a TD-set of cardinality at most $5\gamma_t(G)+4$. Let $D_i = D' \cap V(G_i)$ for $i \in [5]$, and so $\vert D' \vert = \sum_{i=1}^5 \vert D_i\vert $. Pick $j \in [5]$ such that $\vert D_j \vert  \le \vert D_i \vert  $ for all $i \in [5]$. Then, $5 \vert D_j \vert  \le \sum_{i=1}^5 \vert D_i\vert =\vert D'\vert  \le 5\gamma_t(G)+4$, implying that $\vert D_j\vert  \le \gamma_t(G)+\frac{4}{5} $. Therefore, we have a TD-set $D_j$ of $G$ of cardinality $\gamma_t(G)$. This yields a $\gamma_t$-set of $G$ in polynomial time, contradicting Theorem~\ref{TDplanar}.  
\end{proof}

\section{Conclusion}

\noindent
In this paper, we studied the \textsc{Total Dominator Coloring} problem for some important graph classes, including trees, cographs, chain graphs, split graphs, planar graphs and connected bipartite graphs. We determined the values of the total dominator chromatic number for both connected and disconnected cographs. We showed that for a chain graph $G$, $\tdc(G)$ takes one of following three values $2$, $3$ or $4$, and we characterized the chain graphs for every possible value of $\tdc(G)$. On the negative side, we have proved that the TDCD problem remains NP-complete when restricted to planar graphs, split graphs and connected bipartite graphs, strengthening the only known hardness result for the TDCD problem for general graphs. In this way, we established that the TDCD problem can not be solved in polynomial time for chordal graphs. The characterization of trees having $\tdc(T)=\gamma_t(T)+1$ was posed as an open problem in~\cite{TDC1} and we answered that by characterizing trees $T$ satisfying $\tdc(T)= \gamma_t(T) +1$. However, we remark that the condition given in our characterization cannot be checked in polynomial time. Hence, it still remains an open problem to give a polynomial time characterization of trees $T$ satisfying $\tdc(T)= \gamma_t(T) +1$. Since the total dominator chromatic number of many graph classes is still unknown, it would be interesting to work on resolving the complexity status of the \textsc{Total Dominator Coloring} problem of other important graph classes.

%
%
%
%
%

\begin{thebibliography}{10}

\bibitem{TDC7}
S.~Alikhani and N.~Ghanbari.
\newblock Total dominator chromatic number of graphs with specific
  construction.
\newblock {\em Open J. Discrete Appl. Math.}, 3(2):1--7, 2020.

\bibitem{DC8}
S.~Arumugam, J.~Bagga, and K.~R.~Chandrasekar.
\newblock On dominator colorings in graphs.
\newblock {\em Proc. Indian Acad. Sci. Math. Sci.}, 122(4):561--571, 2012.

\bibitem{DC2}
S.~Arumugam, K.~R.~Chandrasekar, N.~Misra, G.~Philip, and
  S.~Saurabh.
\newblock Algorithmic aspects of dominator colorings in graphs.
\newblock In {\em Combinatorial algorithms}, volume 7056 of {\em Lecture Notes
  in Comput. Sci.}, pages 19--30. Springer, Heidelberg, 2011.

\bibitem{planar}
R.~V. Book.
\newblock Book {R}eview: {C}omputers and intractability: {A} guide to the
  theory of {$NP$}-completeness.
\newblock {\em Bull. Amer. Math. Soc. (N.S.)}, 3(2):898--904, 1980.

\bibitem{bip}
M.~Chleb\'{\i}k and J.~Chleb\'{\i}kov\'{a}.
\newblock Approximation hardness of dominating set problems in bounded degree
  graphs.
\newblock {\em Inform. and Comput.}, 206(11):1264--1275, 2008.

\bibitem{DC4}
R.~Gera.
\newblock On the dominator colorings in bipartite graphs.
\newblock In {\em Fourth International Conference on Information Technology
  (ITNG'07)}, pages 947--952. IEEE, 2007.

\bibitem{TDC8}
N.~Ghanbari and S.~Alikhani.
\newblock More on the total dominator chromatic number of a graph.
\newblock {\em J. Inf. Optim. Sci.}, 40(1):157--169, 2019.

\bibitem{TDC10}
N.~Ghanbari and S.~Alikhani.
\newblock More on the total dominator chromatic number of a graph.
\newblock {\em J. Inf. Optim. Sci.}, 40(1):157--169, 2019.

\bibitem{HaHeHe-20}
T.~W.~Haynes, S.~T.~Hedetniemi, and M.~A.~Henning, editors.
\newblock {\em Topics in domination in graphs}, volume~64 of {\em Developments
  in Mathematics}.
\newblock Springer, Cham, 2020.

\bibitem{HaHeHe-21}
T.~W.~Haynes, S.~T.~Hedetniemi, and M.~A.~Henning, editors.
\newblock {\em Structures of domination in graphs}, volume~66 of {\em
  Developments in Mathematics}.
\newblock Springer, Cham, 2021.

\bibitem{HaHeHe-22}
T.~W.~Haynes, S.~T.~Hedetniemi, and M.~A.~Henning.
\newblock Domination in graphs: Core concepts.
\newblock {\em Manuscript (Springer, New York, 2020)}, 2022.

\bibitem{TDC12}
J.~T.~Hedetniemi, S.~M.~Hedetniemi, S.~T.~Hedetniemi, A.~A.~McRae, and D.~F.~Rall.
\newblock Total dominator partitions and colorings of graphs, february 18,
  2011.
\newblock {\em Unpublished manuscript}, 2011.

\bibitem{TDC11}
S.~M.~Hedetniemi, S.~T.~Hedetniemi, A.~A.~McRae, D.~F.~Rall, and J.~T.~Hedetniemi.
\newblock Dominator colorings of graphs, july 9, 2009.
\newblock {\em Unpublished manuscript}, 2009.

\bibitem{chain}
P.~Heggernes and D.~Kratsch.
\newblock Linear-time certifying recognition algorithms and forbidden induced
  subgraphs.
\newblock {\em Nordic J. Comput.}, 14(1-2):87--108 (2008), 2007.

\bibitem{TDC1}
M.~A.~Henning.
\newblock Total dominator colorings and total domination in graphs.
\newblock {\em Graphs Combin.}, 31(4):953--974, 2015.

\bibitem{He-21}
M.~A.~Henning.
\newblock Dominator and total dominator colorings in graphs.
\newblock In {\em Structures of Domination in Graphs}, pages 101--133.
  Springer, 2021.

\bibitem{HeYebook}
M.~A.~Henning and A.~Yeo.
\newblock {\em Total domination in graphs}.
\newblock Springer Monographs in Mathematics. Springer, New York, 2013.

\bibitem{TDC9}
P.~Jalilolghadr, A.~P.~Kazemi, and A.~Khodkar.
\newblock Total dominator coloring of circulant graphs {$C_n(a, b)$}.
\newblock {\em Util. Math.}, 115:105--117, 2020.

\bibitem{TDC5}
A.~P.~Kazemi.
\newblock Total dominator coloring in product graphs.
\newblock {\em Util. Math.}, 94:329--345, 2014.

\bibitem{TDC2}
A.~P.~Kazemi.
\newblock Total dominator chromatic number of a graph.
\newblock {\em Trans. Comb.}, 4(2):57--68, 2015.

\bibitem{TDC4}
A.~P.~Kazemi.
\newblock Total dominator chromatic number of {M}ycieleskian graphs.
\newblock {\em Util. Math.}, 103:129--137, 2017.

\bibitem{cographs}
D.~Kr\'{a}\v{l}, J.~Kratochv\'{\i}l, Z.~Tuza, and G.~J.~Woeginger.
\newblock Complexity of coloring graphs without forbidden induced subgraphs.
\newblock In {\em Graph-theoretic concepts in computer science ({B}oltenhagen,
  2001)}, volume 2204 of {\em Lecture Notes in Comput. Sci.}, pages 254--262.
  Springer, Berlin, 2001.

\bibitem{cograph}
D.~Seinsche.
\newblock On a property of the class of {$n$}-colorable graphs.
\newblock {\em J. Combinatorial Theory Ser. B}, 16:191--193, 1974.

\bibitem{Vi12}
A.~Vijayalekshmi.
\newblock Total dominator colorings in graphs.
\newblock {\em International journal of Advancements in Research and
  Technology}, 4:1--6, 2012.

\bibitem{TDC3}
A.~Vijayalekshmi.
\newblock Total dominator colorings in paths.
\newblock {\em International Journal of Mathematical Combinatorics}, 2:89--95,
  2012.

\bibitem{TDC6}
A.~Vijayalekshmi.
\newblock Total dominator colorings in caterpillars.
\newblock {\em Mathematical Combinatorics}, 2:116--121, 2014.

\end{thebibliography}
\end{document}